\newcolumntype{C}[1]{>{\centering\let\newline\\\arraybackslash\hspace{0pt}}m{#1}}
\newcounter{exctr}
\newtheorem{defn}{\noindent $\mathbf{Definition}$}[section]
\numberwithin{defn}{section}
\newtheorem{thm}[defn]{$\mathbf{Theorem}$}
\begin{document}

\title{A Linear Formulation for Disk Conformal Parameterization of Simply-Connected Open Surfaces}

\titlerunning{Linear Disk Conformal Parameterization}        % if too long for running head

\author{Gary Pui-Tung Choi \and Lok Ming Lui}

\institute{Gary Pui-Tung Choi \at
              John A. Paulson School of Engineering and Applied Sciences, Harvard University\\
              \email{pchoi@g.harvard.edu}
            \and
            Lok Ming Lui \at
               Department of Mathematics, The Chinese University of Hong Kong\\
              \email{lmlui@math.cuhk.edu.hk}
}

\maketitle

\begin{abstract}
Surface parameterization is widely used in computer graphics and geometry processing. It simplifies challenging tasks such as surface registrations, morphing, remeshing and texture mapping. In this paper, we present an efficient algorithm for computing the disk conformal parameterization of simply-connected open surfaces. A double covering technique is used to turn a simply-connected open surface into a genus-0 closed surface, and then a fast algorithm for parameterization of genus-0 closed surfaces can be applied. The symmetry of the double covered surface preserves the efficiency of the computation. A planar parameterization can then be obtained with the aid of a M\"obius transformation and the stereographic projection. After that, a normalization step is applied to guarantee the circular boundary. Finally, we achieve a bijective disk conformal parameterization by a composition of quasi-conformal mappings. Experimental results demonstrate a significant improvement in the computational time by over 60\%. At the same time, our proposed method retains comparable accuracy, bijectivity and robustness when compared with the state-of-the-art approaches. Applications to texture mapping are presented for illustrating the effectiveness of our proposed algorithm.

\keywords{Disk conformal parameterization \and Simply-connected open surface \and Quasi-conformal theory}
\end{abstract}

\section{Introduction}
With the advancement of computer technologies and 3D acquisition techniques, 3D objects nowadays are usually captured and modeled by triangular meshes for further usages. A large number of applications of triangular meshes can be found in computer graphics and computer-aided design. However, working on general meshes is a difficult task because of their complicated geometry. The complicated geometry hinders applications such as surface registration, morphing and texture mapping. To overcome this problem, one common approach is to parameterize the surfaces onto a simple parameter domain so as to simplify the computations. For instance, textures can be designed on the simple domain and then mapped back onto the original surfaces \cite{Haker00,Levy02,Zhang05}. Another example that usually makes use of parameterization is surface registration \cite{Gu04,Lui10,Choi15}. Instead of directly computing the registration between two convoluted surfaces, one can perform the registration on the simple parameter domain, which is much easier. It is also common to perform surface remeshing with the aid of parameterizations \cite{Hormann01,Praun03,Remacle10}. With the development of the computer industry, the problem of finding a good parameterization method is becoming increasingly important.

To make a parameterization useful and applicable, one should seek for a method that minimizes certain types of distortions. In particular, it is desirable to minimize the angular distortions of the 3D meshes. {\it Angle preserving parameterizations}, also known as {\it conformal parameterizations}, effectively preserve the local geometry of the surfaces. Hence, in this paper, our goal is to develop an efficient conformal parameterization algorithm.

The choice of the parameter domain is also a key factor in deciding the parameterization scheme. For simply-connected open surfaces, one popular choice of the parameter domain is the unit disk. Using the unit disk as a parameter domain is advantageous in the following two aspects. Firstly, the existence of the conformal parameterization is theoretically guaranteed. By the uniformization theorem, every simply-connected open surface is conformally equivalent to the open unit disk. Secondly, unlike free and irregular shapes on the plane, a consistent circular boundary of the parameter domain facilitates the comparisons and mappings between different surfaces.

In real applications, besides the quality of the parameterization result, it is also important to consider the computational efficiency of the parameterization algorithm. In particular, a fast algorithm is desired so that the computation can be completed within a short time. In this paper, we develop an efficient algorithm for the disk conformal parameterization of simply-connected open surfaces. To achieve the efficiency, we first transform a topologically disk-like surface to a genus-0 closed surface by a double covering technique. Then we can apply a fast parameterization algorithm for genus-0 closed surfaces to compute a spherical parameterization of the double covered surface. Note that although the size of the problem is doubled by double covering, the computational efficiency is preserved because of the symmetry of the double covered surface. The spherical parameterization, together with a suitable M\"{o}bius transformation and the stereographic projection, provides us with an almost circular planar parameterization for the original surface. A normalization technique followed by a composition of quasi-conformal maps are then used for obtaining a bijective disk conformal parameterization. The bijectivity of the parameterization is supported by quasi-conformal theories. The entire algorithm only involves solving sparse linear systems and hence the computation of the disk conformal parameterization is greatly accelerated. Specifically, our proposed method speeds up the computation of disk conformal parameterizations by over 60\% while attaining accuracy comparable to the state-of-the-art approaches \cite{Choi15b,Gu02}. In addition, our proposed method demonstrates robustness to highly irregular triangulations.

The rest of the paper is organized as follows. In Section \ref{previous}, we review the previous works on surface parameterizations. In Section \ref{contributions}, we highlight the contribution of our work. Our proposed algorithm is then explained in details in Section \ref{main}. The numerical implementation of the algorithm is introduced in Section \ref{implementation}. In Section \ref{experiments}, we present numerical experiments to demonstrate the effectiveness of our proposed method. The paper is concluded in Section \ref{conclusion}.

\section{Previous works} \label{previous}
\begin{table}[t]
    \centering
    \begin{tabular}{ |C{45mm}|c|c|c|c| }
    \hline
    Methods & Boundary & Bijective? & Iterative?\\ \hline
    Shape-preserving \cite{Floater97}  & Fixed & Yes & No \\ \hline
    MIPS \cite{Hormann00} & Free & Yes & Yes \\ \hline
    ABF/ABF++ \cite{Sheffer00,Sheffer05} & Free & Local (no flips) & Yes \\ \hline
    LSCM/DNCP \cite{Desbrun02,Levy02} & Free & No & No\\ \hline
    Holomorphic 1-form \cite{Gu02} & Fixed & No & No \\ \hline
    Mean-value \cite{Floater03} & Fixed & Yes & No\\ \hline
    Yamabe Riemann map \cite{Luo04} & Fixed & Yes & Yes \\ \hline
    Circle patterns \cite{Kharevych05} & Free & Local (no flips) & Yes \\ \hline
    Genus-0 surface conformal map \cite{Jin05} & Free & No & Yes \\ \hline
    Discrete Ricci flow \cite{Jin08}& Fixed & Yes & Yes \\ \hline
    Spectral conformal \cite{Mullen08} & Free & No & No \\ \hline
    Generalized Ricci flow \cite{Yang09} & Fixed & Yes & Yes \\ \hline
    Two-step iteration \cite{Choi15b} & Fixed & Yes & Yes \\ \hline
    \end{tabular}
    \caption{Several previous works on conformal parameterization of simply-connected open surfaces.}
    \label{previouswork}
\end{table}

With a large variety of real applications, surface parameterization has been extensively studied by different research groups. Readers are referred to  \cite{Floater02,Floater05,Hormann07,Sheffer06} for surveys of mesh parameterization methods. In this section, we give an overview of the works on conformal parameterization.

A practical parameterization scheme should retain the original geometric information of a surface as far as possible. Ideally, the isometric parameterization, which preserves geometric distances, is the best parameterization in the sense of geometry preserving. However, isometric planar parameterizations only exist for surfaces with zero Gaussian curvature \cite{Docarmo76}. Hence, it is impossible to achieve isometric parameterizations for general surfaces. A similar yet far more practical substitute is the conformal parameterization. Conformal parameterizations are angle preserving, and hence the infinitesimal shape is well retained. For this reason, numerous studies have been devoted to surface conformal parameterizations.

The existing algorithms for the conformal parameterizations of disk-type surfaces can be divided into two groups, namely, the free-boundary methods and the fixed-boundary methods. For the free-boundary methods, the planar conformal parameterization results are with irregular shapes. In \cite{Hormann00}, Hormann and Greiner proposed the MIPS algorithm for conformal parameterizations of topologically disk-like surfaces. The boundary develops naturally with the algorithm. In \cite{Sheffer00}, Sheffer and de Sturler proposed the Angle Based Flattening (ABF) method to compute conformal maps, based on minimizing the angular distortion in each face to the necessary and sufficient condition of a valid 2D mesh. Sheffer {\it et al.} \cite{Sheffer05} extended the ABF method to ABF++, a more efficient and robust algorithm for planar conformal parameterizations. A new numerical solution technique, a new reconstruction scheme and a hierarchical technique are used to improve the performance.  L\'{e}vy {\it et al.} \cite{Levy02} proposed the Least-Square Conformal Maps (LSCM) to compute a conformal parameterization by approximating the Cauchy-Riemann equation using the least square method. In \cite{Desbrun02}, Desbrun {\it et al.} proposed the Discrete, Natural Conformal Parameterization (DNCP) by computing the discrete Dirichlet energy. In \cite{Kharevych05}, Kharevych {\it et al.} constructed a conformal parameterization based on circle patterns, which are arrangements of circles on every face with prescribed intersection angles. Jin {\it et al.} \cite{Jin05} applied a double covering technique \cite{Gu03} and an iterative scheme for genus-0 surface conformal mapping in \cite{Gu02} to obtain a planar conformal parameterization. In \cite{Mullen08}, Mullen {\it et al.} reported a spectral approach to discrete conformal parameterizations, which involves solving a sparse symmetric generalized eigenvalue problem.

When compared with the free-boundary approaches, the fixed-boundary approaches are advantageous in guaranteeing a more regular and visually appealing silhouette. In particular, it is common to use the unit circle as the boundary for the conformal parameterizations of disk-type surfaces. Numerous researchers have proposed brilliant algorithms for disk conformal parameterizations. Floater \cite{Floater97} proposed the shape-preserving parameterization method for surface triangulations by solving linear systems based on convex combinations. In \cite{Floater03}, Floater improved the parameterization method using a generalization of barycentric coordinates. In \cite{Gu02}, Gu and Yau constructed a basis of holomorphic 1-forms to compute conformal parameterizations. By integrating the holomorphic 1-forms on a mesh, a globally conformal parameterization can be obtained. In \cite{Luo04}, Luo proposed the combinatorial Yamabe flow on the space of all piecewise flat metrics associated with a triangulated surface for the parameterization. In \cite{Jin08}, Jin \emph{et al.} suggested the discrete Ricci flow method for conformal parameterizations, based on a variational framework and circle packing. Yang \emph{et al.} \cite{Yang09} generalized the discrete Ricci flow and improved the computation by allowing two circles to intersect or separate from each other, unlike the conventional circle packing-based method \cite{Jin08}. In \cite{Choi15b}, Choi and Lui presented a two-step iterative scheme to correct the conformality distortions at different regions of the unit disk. Table \ref{previouswork} compares several previous works on the conformal parameterizations of disk-type surfaces.

Our proposed algorithm involves a step of spherical parameterization. Various spherical parameterization algorithms have been developed in the recent few decades, such as \cite{Angenent99,Haker00,Gotsman03,Gu02,Gu03,Gu04,Choi15}. Among the existing algorithms, we apply the fast spherical conformal parameterization algorithm proposed in \cite{Choi15}. More details will be explained in Section \ref{main}.

\section{Contributions} \label{contributions}
In this paper, we introduce a linear formulation for the disk conformal parameterization of simply-connected open surfaces. Unlike the conventional approaches, we first find an initial map via a parameterization algorithm for genus-0 closed surfaces, with the aid of a double covering technique. The symmetry of the double covered surface helps retaining the low computational cost of the problem. After that, we normalize the boundary and apply quasi-conformal theories to ensure the bijectivity and conformality. Our proposed algorithm is advantageous in the following aspects:
\begin{enumerate}
 \item Our proposed method only involves solving a few sparse symmetric positive definite linear systems of equations. It further accelerates the computation of disk conformal parameterizations by over 60\% when compared with the fastest state-of-the-art approach \cite{Choi15b}.
 \item With the significant improvement of the computational time, our proposed method possesses comparable accuracy as of the other state-of-the-art approaches.
 \item The bijectivity of the parameterization is supported by quasi-conformal theories. No foldings or overlaps exist in the parameterization results.
 \item Our proposed method is highly robust to irregular triangulations. It can handle meshes with very sharp and irregular triangular faces.
\end{enumerate}

\section{Proposed method} \label{main}
In this section, we present our proposed method for disk conformal parameterizations of simply-connected open surfaces in details.

A map $f:M \to N$ between two Riemann surfaces is called a {\it conformal map} if it preserves the inner product between vectors in parameter space and their images
in the tangent plane of the surface, up to a scaling factor. More specifically, there exists a positive scalar function $\lambda$ such that $f^* ds^2_N = \lambda ds^2_M$. In other words, conformal maps are angle preserving. The following theorem guarantees the existence of several special types of conformal maps.

\begin{thm} [Uniformization Theorem]
Every simply connected Riemann surface is conformally equivalent to exactly one of the following three domains:
\begin{enumerate}[(i)]
 \item the Riemann sphere,
 \item the complex plane,
 \item the open unit disk.
\end{enumerate}
\end{thm}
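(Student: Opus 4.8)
The plan is to follow the classical potential-theoretic route, in which the three cases of the conclusion are separated by two coarse invariants: compactness, and the existence of a certain harmonic function. Let $X$ denote the simply connected Riemann surface. The \emph{uniqueness} clause (``exactly one'') is the easy half and I would dispose of it first: the Riemann sphere $\hat{\mathbb{C}}$ is compact while $\mathbb{C}$ and $\mathbb{D}$ are not, so it cannot be conformally equivalent to either; and $\mathbb{C}$ and $\mathbb{D}$ are distinguished by Liouville's theorem, since a conformal equivalence $\mathbb{C}\to\mathbb{D}$ would be a nonconstant bounded entire function, which is impossible. Thus the substance of the theorem is entirely the \emph{existence} of the uniformizing map, and this is what the remaining steps must supply.

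First I would treat the compact case. If $X$ is compact and simply connected, it has genus $0$, and the goal is to produce a meromorphic function on $X$ with a single simple pole and no other singularities; such a function is automatically a degree-one (hence bijective) conformal map onto $\hat{\mathbb{C}}$. I would obtain this function by solving a singular Dirichlet-type problem via Perron's method for subharmonic functions, constructing a harmonic function with a prescribed dipole-type singularity and then pairing it with its harmonic conjugate, which exists globally because $X$ is simply connected (no periods obstruct the conjugate). The noncompact case is the deeper one. Here I would fix a base point $p_0$ and attempt to construct the \emph{Green's function} $g(\,\cdot\,,p_0)$: a function harmonic on $X\setminus\{p_0\}$, positive, tending to $0$ along the ideal boundary, and with a logarithmic singularity $g\sim-\log|z-p_0|$ in a local coordinate at $p_0$. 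Again this is built by Perron's method, taking the supremum of the admissible family of subharmonic competitors and invoking Harnack's principle to control the limit.

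The dichotomy between $\mathbb{D}$ and $\mathbb{C}$ is then governed by whether this construction succeeds. In the \emph{hyperbolic} case, where the Green's function $g$ exists, I would let $g^*$ be its (globally defined, by simple connectivity) harmonic conjugate and set $f=e^{-(g+ig^*)}$. Then $|f|=e^{-g}<1$, so $f$ maps $X$ into $\mathbb{D}$ with $f(p_0)=0$, and an argument-principle count shows $f$ is a proper degree-one map, hence a conformal bijection onto $\mathbb{D}$. In the \emph{parabolic} case, where no Green's function exists, I would instead exhaust $X$ by an increasing sequence of relatively compact simply connected pieces $X_n$, solve the corresponding problem with a logarithmic singularity on each $X_n$, normalize appropriately, and extract a Harnack limit producing a harmonic function that still has the logarithmic pole at $p_0$ but is now unbounded and proper. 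Exponentiating as before yields the desired conformal equivalence onto $\mathbb{C}$.

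The hard part will be the noncompact case, and within it two points demand the most care. The first is the limiting argument in the parabolic case: the absence of a Green's function means the naive supremum degenerates, so the normalization of the exhausting solutions and the verification that the Harnack limit is nonconstant with the correct singular behavior is genuinely delicate, and it is precisely here that one must rule out the hyperbolic construction having secretly succeeded. The second recurring obstacle is proving that each candidate map is a \emph{bijection} rather than merely a holomorphic map into the model domain; this rests on a properness-plus-argument-principle computation of the mapping degree, and on using simple connectivity both to guarantee that the harmonic conjugate is single-valued and to preclude nontrivial monodromy that would spoil injectivity. I would organize the write-up so that these two analytic cruxes are isolated as lemmas, leaving the case analysis itself routine.
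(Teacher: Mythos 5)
The paper does not prove this theorem at all: its entire ``proof'' is the citation \emph{See \cite{Schoen94}}, which is the standard practice for an applied paper invoking uniformization as a black box. So there is no in-paper argument to compare against; what you have written is a sketch of the classical potential-theoretic proof (Perron's method, Green's functions, the elliptic/parabolic/hyperbolic trichotomy), which is indeed the route taken in the classical literature the paper defers to. Your high-level structure is correct: uniqueness via compactness plus Liouville, the compact case via a meromorphic function with one simple pole, and the noncompact dichotomy governed by existence of the Green's function, with $f = e^{-(g+ig^*)}$ in the hyperbolic case and an exhaustion-plus-Harnack limit in the parabolic case. You also correctly identify where the real difficulty lives (the parabolic limit and the injectivity/degree arguments).

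Two caveats keep this from being a proof rather than a roadmap. First, a genuine technical slip: in the hyperbolic case you claim the conjugate $g^*$ is ``globally defined, by simple connectivity,'' but $g$ lives on $X \setminus \{p_0\}$, which is \emph{not} simply connected; $g^*$ acquires a period around the puncture. What saves the construction is that the logarithmic singularity is normalized with coefficient $1$, so that period is exactly $2\pi$ and the exponential $e^{-(g+ig^*)}$ is single-valued even though $g^*$ is not. (The same issue is harmless in the compact case only because a compact simply connected surface minus a point is topologically $\mathbb{R}^2$.) Second, everything you flag as ``delicate'' --- the Perron construction of $g$ and of the dipole function, the degeneration analysis in the parabolic case, and the properness/argument-principle proof of bijectivity --- is precisely the analytic content of the theorem, and it is deferred rather than supplied. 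As a plan to be expanded into the standard textbook proof it is sound; as a standalone proof it is incomplete, though no more incomplete than the paper's own citation.
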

\begin{proof}
See \cite{Schoen94}. \qed
\end{proof}

With this theoretical guarantee, our goal is to efficiently and accurately compute a conformal map $f:M \to \mathbb{D}$ from a topologically disk-like surface $M$ to the open unit disk $\mathbb{D}$.

\begin{table}[t]
    \centering
    \begin{tabular}{ |C{30mm}|C{40mm}|C{40mm}| }
    \hline
    Features & Two-step iterative approach \cite{Choi15b} & Our proposed method\\ \hline
    Type of input surfaces & Simply-connected open surfaces & Simply-connected open surfaces\\ \hline
    Initial map & Disk harmonic map & double covering followed by a fast spherical conformal map\\ \hline
    Enforcement of boundary when computing the initial map & Yes & No \\ \hline
    Method for correcting the conformality distortion & Step 1: Use the Cayley transform and work on the upper half plane \newline Step 2: Iterative reflections along the unit circle until convergence &  One-step normalization and composition of quasi-conformal maps \\ \hline
    Output & Unit disk & Unit disk\\ \hline
    Bijectivity & Yes & Yes\\ \hline
    Iterations required? & Yes & No\\ \hline
    \end{tabular}
    \caption{Features of the two-step iterative approach \cite{Choi15b} and our proposed method.}
    \label{difference}
\end{table}

Before explaining our proposed algorithm in details, we point out the major differences between our proposed method and the two-step iterative approach \cite{Choi15b}. Table \ref{difference} highlights the main features of our proposed method and the two-step iterative approach \cite{Choi15b} for disk conformal parameterizations. The two-step iterative approach makes use of the disk harmonic map as an initial map, with the arc-length parameterized circular boundary constraint. This introduces large conformality distortions in the initial map. To correct the conformality distortion, two further steps are required in \cite{Choi15b}. First, the Cayley transform is applied to map the initial disk onto the upper half plane for correcting the distortion at the inner region of the disk. Then, iterative reflections along the unit circle are applied for correcting the distortion near the boundary of the disk until convergence.

In contrast, our proposed fast method primarily consists of only two stages. In the first stage, we find an initial planar parameterization via double covering followed by a spherical conformal map. Since there is no enforcement of the boundary in computing the initial planar parameterization, the conformality distortion of our initial map is much lower than that by the disk harmonic map. In the second stage, we enforce the circular boundary, and then alleviate the conformality distortion as well as achieving the bijectivity using quasi-conformal theories. The absence of iterations in our proposed algorithm attributes to the significant enhancement in the computational time when compared with the two-step iterative approach \cite{Choi15b}.

In the following, we explain the two stages of our proposed algorithm in details.

\subsection{Finding an initial map via double covering}
Instead of directly computing the map $f:M \to \mathbb{D}$ from a simply-connected open surface $M$ to the unit disk $\mathbb{D}$, we tackle the problem by using a simple double covering technique. The double covering technique was also suggested in \cite{Gu03} to compute conformal gradient field of surfaces with boundaries. In the following, we discuss the construction in the continuous setting. Specifically, we construct a genus-0 closed surface $\widetilde{M}$ by the following method. First, we duplicate $M$ and change its orientation. Denote the new copy by $M'$. Then we identify the boundaries of the two surfaces:
\begin{equation}
\partial M \longleftrightarrow \partial M'.
\end{equation}
By the above identification, the two surfaces $M$ and $M'$ are glued along the two boundaries. Note that here we do not identify the interior of the two surfaces $M$ and $M'$. As a result, a closed surface is formed. Denote the new surface by $\widetilde{M}$. It can be easily noted that since $M$ and $M'$ are simply-connected open surfaces, the new surface $\widetilde{M}$ is a genus-0 closed surface. More explicitly, denote by $K$ and $\kappa_g$ the Gaussian curvature and geodesic curvature. Assume that we slightly edit the boundary parts of $M$ and $M'$ so that $\widetilde{M}$ is smooth. Then by the Gauss-Bonnet theorem, we have
\begin{equation}
\int_M K dA + \int_{\partial M} \kappa_g ds = 2\pi \chi(M) =  2\pi
\end{equation}
and
\begin{equation}
\int_{M'} K dA + \int_{\partial M'} \kappa_g ds = 2\pi \chi(M') =  2\pi.
\end{equation}
Hence, we have
\begin{equation}
\begin{split}
2\pi \chi(\widetilde{M}) &= \int_{\widetilde{M}} K dA \\
&= \int_M K dA + \int_{M'} K dA \\
&= \left(2\pi - \int_{\partial M} \kappa_g ds\right) +  \left(2\pi - \int_{\partial M'} \kappa_g ds\right) \\
&= 4\pi - \int_{\partial M} \kappa_g ds + \int_{\partial M} \kappa_g ds \\
&= 4\pi.
\end{split}
\end{equation}
Therefore, the new surface $\widetilde{M}$ has Euler characteristic $\chi(\widetilde{M})=2$, which implies that it is a genus-0 closed surface. As a remark, in the discrete case, the unsmooth part caused by the double covering does not cause any difficulty in our algorithm since we are only considering the angle structure of the glued mesh. The details of the combinatorial argument are explained in Section \ref{implementation}.

After obtaining $\widetilde{M}$ by the abovementioned double covering technique, we look for a conformal map that maps $\widetilde{M}$ to the unit sphere. By the uniformization theorem, every genus-0 closed surface is conformally equivalent to the unit sphere. Hence, the existence of such a conformal map is theoretically guaranteed. In \cite{Choi15}, Choi \emph{et al.} proposed a fast algorithm for computing a conformal map between genus-0 closed surfaces and the unit sphere. The algorithm consists of two steps, and in each step one sparse symmetric positive definite linear system is to be solved. In the following, we briefly describe the mentioned spherical conformal parameterization algorithm.

The {\it harmonic energy functional} of a map $\psi :N \to \mathbb{S}^2$ from a genus-0 closed surface $N$ to the unit sphere is defined as
\begin{equation}
 E(\psi) = \int_N ||\nabla \psi||^2 dv_N.
\end{equation}
In the space of mappings, the critical points of $E(\psi)$ are called {\it harmonic mappings}. For genus-0 closed surfaces, conformal maps are equivalent to harmonic maps \cite{Jost11}. Therefore, to find a spherical conformal map, we can consider solving the following Laplace equation
\begin{equation}\label{eqt:original}
 \Delta^T \psi = 0
\end{equation}
subject to the spherical constraint $||\psi|| = 1$, where $\Delta^T \psi$ is the tangential component of $\Delta \psi$ on the tangent plane of $\mathbb{S}^2$. Note that this problem is nonlinear. In \cite{Choi15}, the authors linearize this problem by solving the equation on the complex plane:
\begin{equation} \label{eqt:laplace}
 \Delta \phi = 0
\end{equation}
given three point boundary correspondences $\phi(a_i) = b_i$, where $a_i,b_i \in \mathbb{C}$ for $i = 1,2,3$. Note that $\Delta^T \phi = \Delta \phi = 0$ since the target domain is now $\mathbb{C}$. Now the problem (\ref{eqt:laplace}) becomes linear since $\Delta \phi$ is linear and the nonlinear constraint $||\psi||=1$ in the original problem (\ref{eqt:original}) is removed. After solving the problem (\ref{eqt:laplace}), the inverse stereographic projection is applied for obtaining a spherical parameterization. Note that in the discrete case, the conformality of the inner region on the complex plane is negligible but that of the outer region on the complex plane is quite large. Correspondingly, the conformality distortion near the North pole of the sphere is quite large. Therefore, to correct the conformality distortion near the North pole, the authors in \cite{Choi15} propose to apply the South-pole stereographic projection to project the sphere onto the complex plane. Unlike the result obtained by solving Equation (\ref{eqt:laplace}), the part with high conformality distortion is now at the inner region on the plane. By fixing the outermost region and composing the map with a suitable quasi-conformal map, the distortion of the inner region can be corrected. Finally, by the inverse South-pole stereographic projection, a bijective spherical conformal parameterization with negligibly low distortions can be obtained. Readers are referred to \cite{Schoen97} and \cite{Choi15} for more details of the harmonic map theory and the abovementioned algorithm respectively.

The combination of the double covering technique and the fast spherical conformal parameterization algorithm in \cite{Choi15} is particularly advantageous. It should be noted that because of the symmetry of the double covered surface, half of the entries in the coefficient matrix of the discretization of the Laplace Equation (\ref{eqt:laplace}) are duplicated. Therefore, even we have doubled the size of the problem under the double covering technique, we can save half of the computational cost of the coefficient matrix by only computing half of the entries. Moreover, the spherical conformal parameterization algorithm in \cite{Choi15} involves solving only two sparse symmetric positive definite systems of equations. Therefore, the computation is still highly efficient.

After finding a spherical conformal map $\tilde{f}: \widetilde{M} \to \mathbb{S}^2$ for the glued surface $\widetilde{M}$ using the parameterization algorithm, note that by symmetry, we can separate the unit sphere into two parts, each of which exactly corresponds to one of $M$ and $M'$. Since our goal is to find a disk conformal map $f:M \to \mathbb{D}$, we put our focus on only one of the two parts. Now, we apply a M\"obius transformation on $\mathbb{S}^2$ so that the two parts become the northern and southern hemispheres of $\mathbb{S}^2$. After that, by applying the stereographic projection $P: \mathbb{S}^2 \to \overline{\mathbb{C}}$ defined by
\begin{equation}
 P(x,y,z) = \frac{x}{1-z} + i \frac{y}{1-z},
\end{equation}
the southern hemisphere is mapped onto the open unit disk $\mathbb{D}$. Since the M\"obius transformation and the stereographic projection are both conformal mappings, the combination of the above steps provides a conformal map $f:M \to \mathbb{D}$.

Theoretically, by the symmetry of the double covered surface, the boundary of the planar region obtained by the above stereographic projection should be a perfect unit circle. However, in the discrete case, due to irregular triangulations of the meshes and the conformality distortions of the map, the boundary is usually different from a perfect circle, as suggested in the experimental results in \cite{Choi15b}. In other words, the planar region $R$ we obtained after applying the stereographic projection may not be a unit disk. An illustration is given in Figure \ref{fig:foot_disk_unnormalized}. To solve this issue, we need one further step to enforce the circular boundary, at the same time maintaining low conformality distortions and preserving the bijectivity of the parameterization.

\begin{figure}[t]
\begin{center}
   \includegraphics[width=0.5\linewidth]{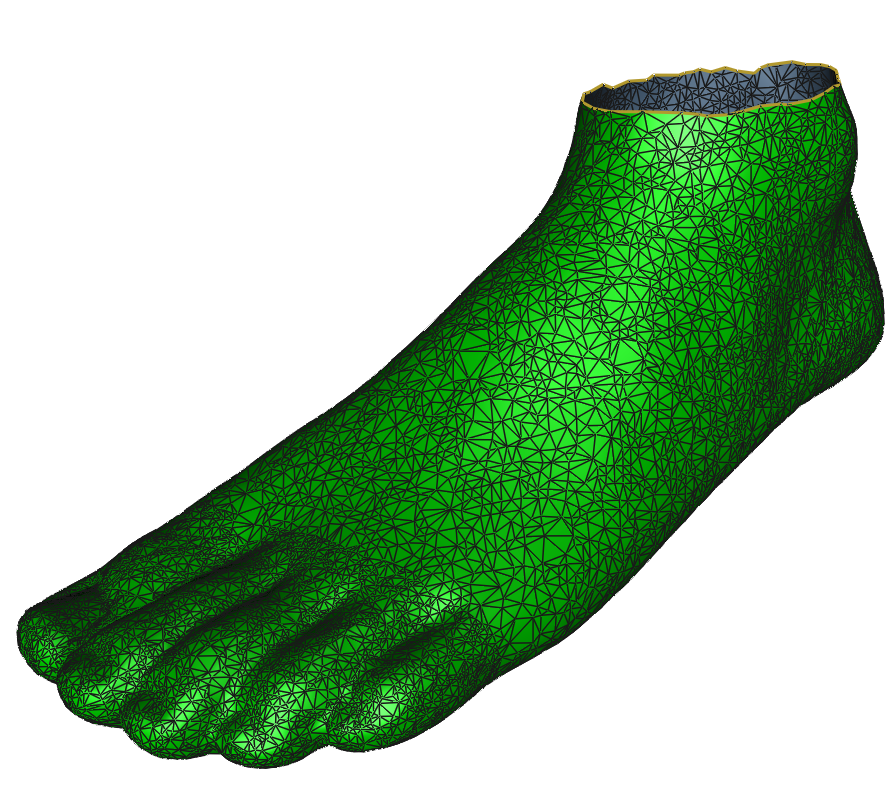}
   \includegraphics[width=0.45\linewidth]{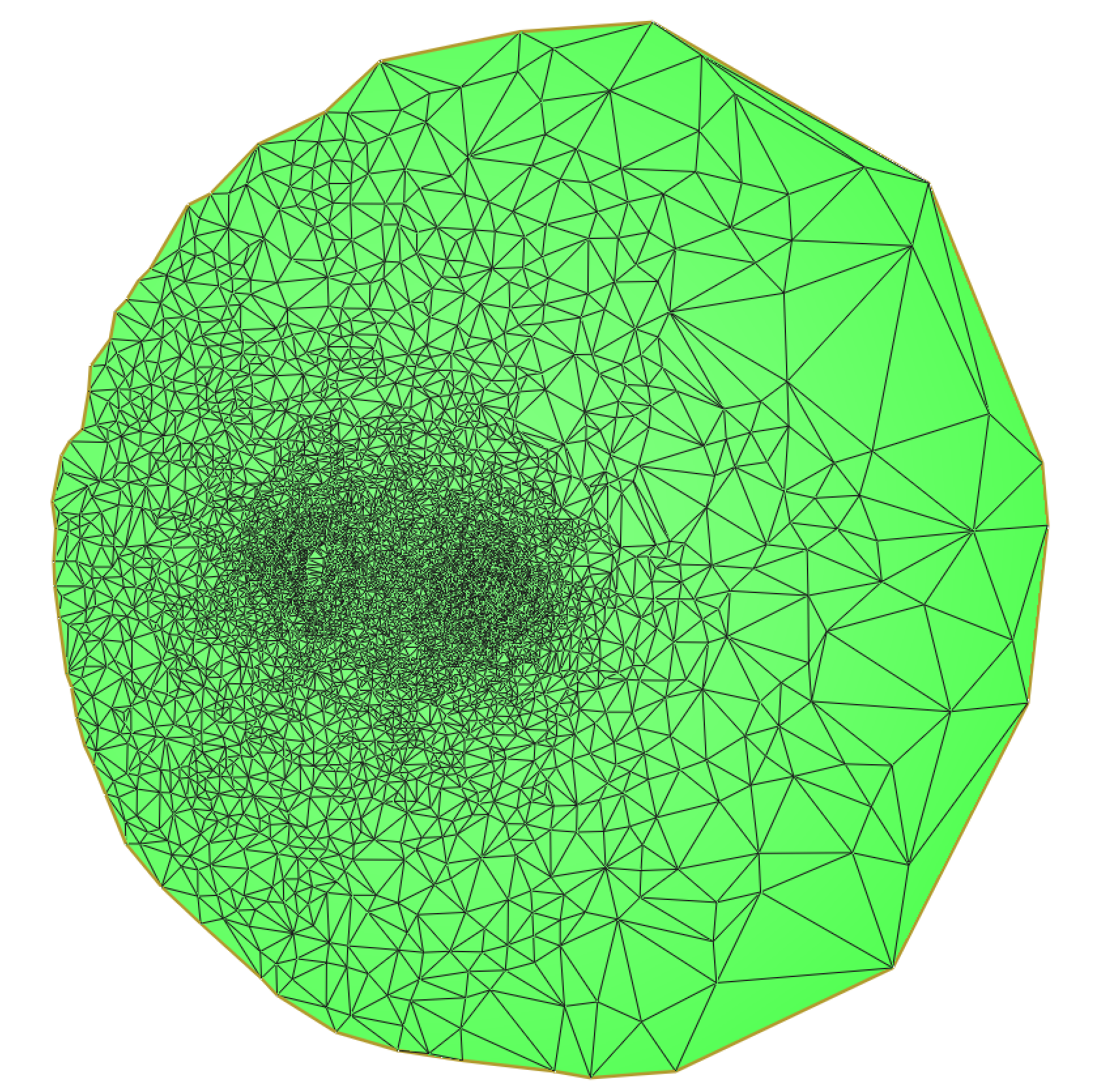}
\end{center}
   \caption{A simply-connected open human foot model and the planar conformal parameterization obtained using double covering followed by spherical conformal map and stereographic projection. It can be observed that the resulting boundary is not perfectly circular.}
\label{fig:foot_disk_unnormalized}
\end{figure}

\subsection{Enforcing the circular boundary to achieve a bijective disk conformal parameterization}
\begin{figure}[t]
\begin{center}
   \includegraphics[width=0.8\linewidth]{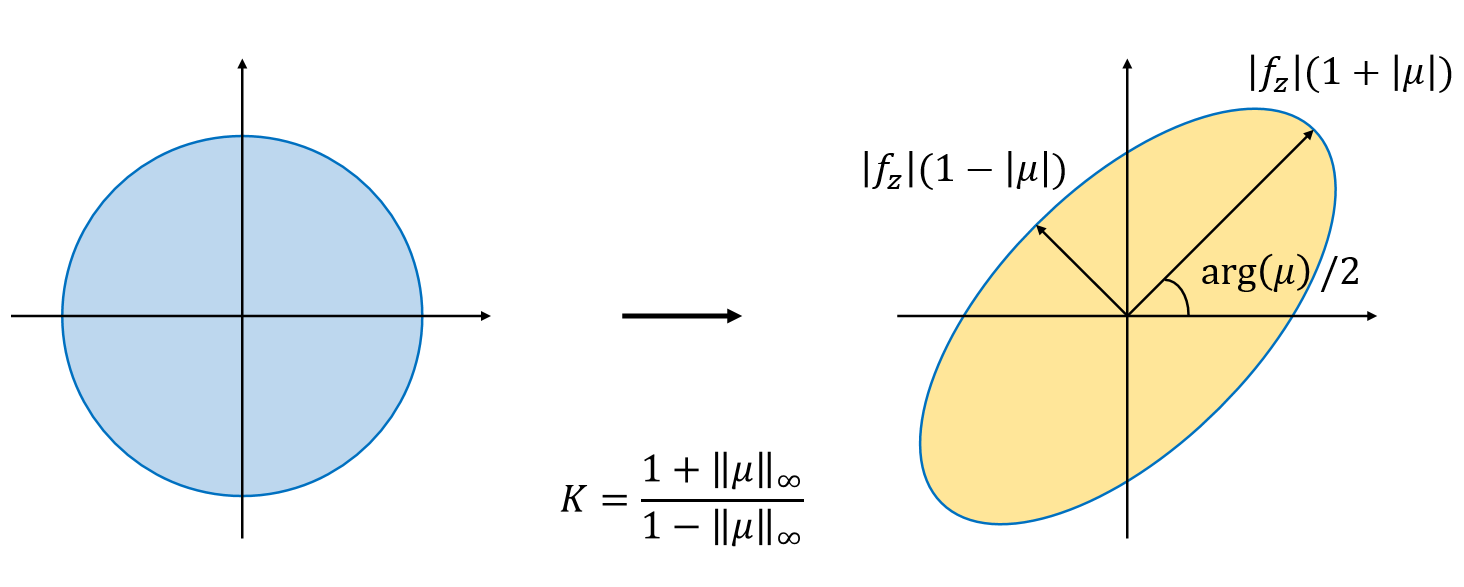}
\end{center}
   \caption{An illustration of quasi-conformal mappings. All information of a quasi-conformal map can be determined by the Beltrami coefficient $\mu$.}
\label{fig:qc}
\end{figure}

To control the conformality distortion and the bijectivity, our idea is to normalize the boundary and then compose the map with a \emph{quasi-conformal map}. Quasi-conformal maps are a generalization of conformal maps, which are orientation preserving homeomorphisms between Riemann surfaces with bounded conformality distortions. Intuitively, a conformal mapping maps infinitesimal circles to infinitesimal circles, while a quasi-conformal mapping maps infinitesimal circles to infinitesimal ellipses with bounded eccentricity \cite{Gardiner00}. Mathematically, a \emph{quasi-conformal map} $f:\mathbb{C} \to \mathbb{C}$ satisfies the Beltrami equation
\begin{equation}\label{beltramieqt}
 \frac{\partial f}{\partial \bar{z}} = \mu(z) \frac{\partial f}{\partial z}
\end{equation}
for some complex-valued functions $\mu$ with $\| \mu \|_\infty<1$. $\mu$ is called the \emph{Beltrami coefficient} of $f$.

The Beltrami coefficient captures the important information of the mapping $f$. For instance, the angles and the magnitudes of both the maximal magnification and the maximal shrinkage can be easily determined by the Beltrami coefficient $\mu$ (see Figure \ref{fig:qc}). Specifically, the angle of the maximal magnification is $arg(\mu(p))/2$ with the magnifying factor $1 + |\mu(p)|$, and the angle of the maximal shrinkage is the orthogonal angle $(arg(\mu(p)) - \pi)/2$ with the shrinking factor $1 - |\mu(p)|$. The maximal dilation of $f$ is given by:
\begin{equation}
K(f) = \frac{1+\|\mu\|_\infty}{1-\|\mu\|_\infty}.
\end{equation}
It is also noteworthy that $f$ is conformal around a small neighborhood of $p$ if and only if $\mu(p) = 0$. Hence, $|\mu|$ is a good indicator of the angular distortions of a mapping.

In fact, the norm of the Beltrami coefficient $\mu$ is not only related to the conformality distortion but also the bijectivity of the associated quasi-conformal mapping, as explained by the following theorem:
\begin{thm}\label{bijectivity}
If $f:M \to \mathbb{D}$ is a $C^1$ map satisfying $\|\mu_f\|_{\infty} <1$, then $f$ is bijective.
\end{thm}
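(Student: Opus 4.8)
The plan is to deduce global bijectivity from two ingredients: a pointwise Jacobian estimate that turns $f$ into a local diffeomorphism, and a topological counting argument that upgrades local injectivity to global injectivity. I would begin by translating the Beltrami bound into a statement about the Jacobian. Writing $f$ in the complex coordinate $z$, the Beltrami equation (\ref{beltramieqt}) gives $f_{\bar{z}} = \mu_f f_z$, so the Jacobian determinant is
\begin{equation}
J_f = |f_z|^2 - |f_{\bar{z}}|^2 = |f_z|^2 \left( 1 - |\mu_f|^2 \right).
\end{equation}
The hypothesis $\|\mu_f\|_\infty < 1$ forces $f_z \neq 0$ everywhere, since otherwise $\mu_f = f_{\bar{z}}/f_z$ could not be bounded below $1$, and it gives $1 - |\mu_f|^2 \geq 1 - \|\mu_f\|_\infty^2 > 0$. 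Hence $J_f > 0$ throughout $M$, and by the inverse function theorem $f$ is an orientation-preserving local $C^1$ diffeomorphism; in particular it is locally injective and open.

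Next I would promote this to global injectivity by a degree-theoretic argument, using that $f$ extends continuously to the closure with $f(\partial M) \subset \partial \mathbb{D}$ and that $f|_{\partial M}$ wraps once, with positive orientation, around the unit circle. For any target point $w_0 \in \mathbb{D}$, the number of preimages in $M$, each counted with the sign of $J_f$, equals the topological degree of $f$, which in turn equals the winding number of $f|_{\partial M}$ about $w_0$, namely $1$. Because $J_f > 0$, every preimage contributes $+1$, so $w_0$ has exactly one preimage. Surjectivity onto $\mathbb{D}$ follows from the same count together with openness of $f$, and thus $f$ is a bijection from $M$ onto $\mathbb{D}$.

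I expect the genuine difficulty to lie in this second step rather than in the Jacobian computation. The pointwise bound $\|\mu_f\|_\infty < 1$ yields only \emph{local} injectivity, and a local diffeomorphism need not be globally injective, so some global input, here the degree-one boundary correspondence onto $\partial \mathbb{D}$, is indispensable for the conclusion to hold. An alternative route that makes the same dependence explicit is to invoke the measurable Riemann mapping theorem: there exists a quasi-conformal homeomorphism $\Phi$ with Beltrami coefficient $\mu_f$, whence $h := f \circ \Phi^{-1}$ satisfies $\mu_h = 0$ and is therefore holomorphic, and $f = h \circ \Phi$. Since $\Phi$ is already a homeomorphism, bijectivity of $f$ reduces to bijectivity of the holomorphic map $h$; and because a holomorphic self-map of the disk is injective only when it is a conformal automorphism, one must again appeal to the boundary behaviour to finish. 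Either way, the crux is the passage from infinitesimal Beltrami control to the global topological conclusion, and I would organize the write-up so that this boundary hypothesis is stated plainly before the degree count is carried out.
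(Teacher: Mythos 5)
Your Jacobian step coincides exactly with the paper's: the paper's formal proof is only a citation to Gardiner--Lakic \cite{Gardiner00}, but the explanatory paragraph following the theorem computes $J_f = \left|\partial f/\partial z\right|^2(1-|\mu_f|^2)$ and deduces $J_f>0$ from $\|\mu_f\|_\infty<1$, precisely as you do. Where you genuinely diverge is the global step. The paper argues: since $f$ is \emph{proper} and $\mathbb{D}$ is simply connected, the local diffeomorphism $f$ is a covering map, necessarily the universal covering of degree $1$, hence bijective. You instead assume $f$ extends continuously to $\partial M$ with $f(\partial M)\subset\partial\mathbb{D}$ and winding number $1$, and run a degree count: every preimage of $w_0\in\mathbb{D}$ contributes $+1$ because $J_f>0$, and the signed total equals the boundary winding number, so each $w_0$ has exactly one preimage. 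Both routes are sound, and both---as you correctly and explicitly observe---must import a global hypothesis absent from the theorem as literally stated; without one the statement fails, e.g.\ $z\mapsto e^z$ on a rectangle of height $4\pi$ in the left half-plane maps a simply-connected domain into $\mathbb{D}$ with $\mu\equiv 0$, yet is not injective. The paper's choice, properness, demands nothing about boundary regularity and lets covering-space theory plus simple connectivity of $\mathbb{D}$ finish in one line; your choice, the degree-one boundary correspondence, requires a boundary extension but delivers the ``exactly one preimage'' conclusion directly and is closer to what the algorithm actually enforces, since the construction in the paper explicitly normalizes the boundary onto the unit circle before invoking this theorem. One small slip in your alternative sketch via the measurable Riemann mapping theorem: an injective holomorphic self-map of $\mathbb{D}$ need not be a conformal automorphism (consider $z\mapsto z/2$), so ``injective only when it is a conformal automorphism'' is not right, though your larger point---that the holomorphic factor $h$ still needs a boundary or properness argument to be bijective---stands.
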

\begin{proof}
 See \cite{Gardiner00}. \qed
\end{proof}

This theorem can be explained with the aid of the Jacobian of $f$. The Jacobian $J_f$ of $f$ is given by
\begin{equation}
J_f = \left| \frac{\partial f}{\partial z} \right|^2 (1-|\mu_f|^2).
\end{equation}
Suppose $\|\mu_f\|_{\infty} <1$, then we have $\left| \frac{\partial f}{\partial z} \right|^2 \neq 0$ and $(1-|\mu_f|^2)>0$. Therefore, $J(f)$ is positive everywhere. Since $\mathbb{D}$ is simply-connected and $f$ is proper, we can conclude that $f$ is a diffeomorphism. In fact, $f$ is a universal covering map of degree 1. Therefore, $f$ must be bijective. One important consequence of Theorem \ref{bijectivity} is that we can easily achieve the bijectivity of a quasi-conformal map by enforcing its associated Beltrami coefficient to be with supremum norm less than 1.

Moreover, it is possible for us to reconstruct a mapping by a given Beltrami coefficient, as explained by the following theorem:
\begin{thm}\label{correspondence}
Let $M_1$ and $M_2$ be two simply-connected open surfaces. Given 2-point correspondences, every Beltrami coefficient $\mu$ with $\|\mu\| <1$ is associated with a unique quasi-conformal homeomorphism $f: M_1 \to M_2$.
\end{thm}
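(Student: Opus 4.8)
The plan is to recognize Theorem~\ref{correspondence} as a surface version of the \emph{Measurable Riemann Mapping Theorem} (Morrey; Ahlfors--Bers), and to split the argument into a reduction to the unit disk, an existence step, and a uniqueness step. First I would reduce to the model case $M_1 = M_2 = \mathbb{D}$: since $M_1$ and $M_2$ are simply-connected open surfaces, the Uniformization Theorem furnishes conformal identifications $\varphi_1 : M_1 \to \mathbb{D}$ and $\varphi_2 : M_2 \to \mathbb{D}$. As a conformal change of coordinates has vanishing Beltrami coefficient, conjugating by $\varphi_1$ and $\varphi_2$ (via the composition law for Beltrami coefficients, which I would record as a preliminary identity) sends $\mu$ to a measurable coefficient $\tilde{\mu}$ on $\mathbb{D}$ with $\|\tilde{\mu}\|_\infty = \|\mu\|_\infty < 1$; moreover, realizing $\mu$ by a qc homeomorphism $M_1 \to M_2$ is equivalent to realizing $\tilde{\mu}$ by a qc self-homeomorphism of $\mathbb{D}$.

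For existence I would invoke the Measurable Riemann Mapping Theorem: every measurable coefficient on the plane with sup-norm strictly below $1$ is the Beltrami coefficient of a quasi-conformal homeomorphism of $\mathbb{C}$. Extending $\tilde{\mu}$ from $\mathbb{D}$ to $\mathbb{C}$ by $0$ and applying the theorem yields a qc homeomorphism $F : \mathbb{C} \to \mathbb{C}$ with $\mu_F = \tilde{\mu}$ on $\mathbb{D}$. The image $\Omega = F(\mathbb{D})$ is a simply-connected proper subdomain of $\mathbb{C}$, so the Riemann mapping theorem provides a conformal $\psi : \Omega \to \mathbb{D}$; since post-composition with a conformal map leaves the Beltrami coefficient untouched, $\psi \circ F|_{\mathbb{D}} : \mathbb{D} \to \mathbb{D}$ is a qc self-homeomorphism of $\mathbb{D}$ still carrying the coefficient $\tilde{\mu}$. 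Transporting back through $\varphi_1^{-1}$ and $\varphi_2$ produces a qc homeomorphism $f : M_1 \to M_2$ with Beltrami coefficient $\mu$, and post-composing with a conformal automorphism of $\mathbb{D}$ — which again preserves $\mu$ — lets me realize the prescribed $2$-point correspondence. Echoing Theorem~\ref{bijectivity}, the positivity of the Jacobian confirms that $f$ is orientation-preserving and bijective.

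The uniqueness step is where the $2$-point data is essential. If $f_1$ and $f_2$ both carry the Beltrami coefficient $\mu$ and respect the prescribed correspondence, then by the composition law $g = f_2 \circ f_1^{-1}$ has Beltrami coefficient $0$, so $g$ is a conformal automorphism of $M_2 \cong \mathbb{D}$ fixing the two prescribed image points. Every nontrivial automorphism of $\mathbb{D}$ is an elliptic, parabolic or hyperbolic M\"{o}bius transformation and hence fixes at most one point of the open disk; fixing two distinct interior points therefore forces $g = \mathrm{id}$, whence $f_1 = f_2$.

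The main obstacle is the existence step: the Measurable Riemann Mapping Theorem is the genuinely deep analytic ingredient, whose own proof rests on the mapping properties of the Cauchy and Beurling--Ahlfors transforms on $L^p$. I would treat it as a cited black box, exactly as the excerpt treats the Uniformization Theorem, so that the remaining work is the normalization via the Riemann mapping guaranteeing invariance of $\mathbb{D}$ and the elementary fixed-point count delivering uniqueness.
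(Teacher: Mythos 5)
The paper does not actually prove Theorem~\ref{correspondence}: it quotes it from Gardiner--Lakic. Your plan --- uniformize $M_1$, $M_2$ onto $\mathbb{D}$, extend $\tilde{\mu}$ by zero, invoke the measurable Riemann mapping theorem, renormalize with the Riemann mapping theorem, and settle uniqueness through the composition law (the paper's Theorem~\ref{composition}) --- is precisely the standard route behind the cited result, so there is no genuine difference of method to weigh. The problem is a gap inside your normalization step.

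Your existence and uniqueness arguments read ``2-point correspondences'' in incompatible ways. In the uniqueness step you take both prescribed points in the \emph{open} disk, and there your fixed-point count is correct: a nontrivial automorphism of $\mathbb{D}$ fixes at most one interior point, so two interior fixed points force the identity. But in the existence step you assert that post-composition with some $A \in \mathrm{Aut}(\mathbb{D})$ realizes an \emph{arbitrary} correspondence of two interior points, and this is false. The solutions with coefficient $\tilde{\mu}$ form a single orbit $\{A \circ f_0 : A \in \mathrm{Aut}(\mathbb{D})\}$ under post-composition, and $\mathrm{Aut}(\mathbb{D})$ is a $3$-real-parameter group while two interior-point conditions impose $4$ real equations. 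Concretely, automorphisms are isometries of the Poincar\'e metric $d_{\mathbb{D}}$, so an $A$ with $A(f_0(p_i)) = q_i$ for $i=1,2$ exists if and only if $d_{\mathbb{D}}\bigl(f_0(p_1),f_0(p_2)\bigr) = d_{\mathbb{D}}(q_1,q_2)$, which fails for generic data. The statement becomes true under a normalization of total real dimension $3$: for instance one interior point together with one boundary point, since $\mathrm{Aut}(\mathbb{D})$ acts simply transitively on pairs in $\mathbb{D} \times \partial\mathbb{D}$ (existence) and only the identity fixes such a pair (uniqueness); three boundary points also work. It is worth noting that if the surfaces were uniformized by $\mathbb{C}$ rather than $\mathbb{D}$, then the relevant group is the affine group and your two-interior-point reading would be exactly right --- presumably the origin of the ``2-point'' count --- but your reduction sends everything to $\mathbb{D}$, where it is not. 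You should fix one consistent interpretation and run both halves of the argument with it, or else weaken the conclusion to uniqueness up to post-composition with an element of $\mathrm{Aut}(\mathbb{D})$.
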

\begin{proof}
 See \cite{Gardiner00}. \qed
\end{proof}

For the aspect of numerical computations, Lui \emph{et al.} \cite{Lui13} proposed the linear Beltrami solver ({\bf LBS}), a fast algorithm for reconstructing a quasi-conformal map on a rectangular domain from a given Beltrami coefficient. The key idea of {\bf LBS} is as follows.

By expanding the Beltrami Equation (\ref{beltramieqt}), we have
\begin{equation}
\mu_g = \frac{(u_x - v_y) + i(v_x + u_y)}{(u_x + v_y) + i(v_x - u_y)}.
\end{equation}
Suppose $\mu(f) = \rho + i \tau$. Then, $v_x$ and $v_y$ can be expressed as linear combinations of $u_x$ and $u_y$:
\begin{equation}
 \begin{array}{cc}
  -v_y &= \alpha_1 u_x + \alpha_2 u_y,\\
  v_x &= \alpha_2 u_x + \alpha_3 u_y,\\
  \end{array}
\end{equation}
where
\begin{equation}
\label{eqt:alpha123}
\begin{split}
  \alpha_1 &= \frac{(\rho -1)^2 + \eta^2}{1-\rho^2 - \eta^2},\\
  \alpha_2 &= -\frac{2\eta}{1-\rho^2 - \eta^2}, \\
  \alpha_3 &= \frac{1+2\rho+\rho^2 +\eta^2}{1-\rho^2 - \eta^2}. \\
\end{split}
\end{equation}
Similarly, we can express $u_x$ and $u_y$ as linear combinations of $v_x$ and $v_y$:
\begin{equation}
 \begin{array}{cc}
  -u_y &= \alpha_1 v_x + \alpha_2 v_y,\\
  u_x &= \alpha_2 v_x + \alpha_3 v_y.\\
  \end{array}
\end{equation}
Hence, to solve for a quasi-conformal map, it remains to solve
\begin{equation}\label{eqt:BeltramiPDE}
\nabla \cdot \left(A \left(\begin{array}{c}
u_x\\
u_y \end{array}\right) \right) = 0\ \ \mathrm{and}\ \ \nabla \cdot \left(A \left(\begin{array}{c}
v_x\\
v_y \end{array}\right) \right) = 0
\end{equation}
where
$A = \left( \begin{array}{cc} \alpha_1 & \alpha_2\\
\alpha_2 & \alpha_3 \end{array}\right).$

In the discrete case, the above elliptic PDEs (\ref{eqt:BeltramiPDE}) can be discretized into sparse linear systems. For details, please refer to \cite{Lui13}. In the following discussion, we denote the quasi-conformal map associated with the Beltrami coefficient $\mu$ obtained by {\bf LBS} by $\mathbf{LBS}(\mu)$.

Another important property of quasi-conformal mappings is about their composition mappings. In fact, the Beltrami coefficient of a composition mapping can be explicitly expressed in terms of the Beltrami coefficients of the original mappings.
\begin{thm}\label{composition}
Let $f: \Omega \subset \mathbb{C} \to f(\Omega)$ and $g: f(\Omega) \to \mathbb{C}$ be two quasi-conformal mappings. Then the Beltrami coefficient of $g \circ f$ is given by
\begin{equation}
\mu_{g \circ f} = \frac{\mu_f+(\overline{f_z}/f_z) (\mu_g \circ f)}{1+(\overline{f_z}/f_z)  \overline{\mu_f} (\mu_g \circ f)}.
\end{equation}
In particular, if $\mu_{f^{-1}} \equiv  \mu_g$, then since $\mu_{f^{-1}} \circ f = -(f_z / \overline{f_z}) \mu_f$, we have
\begin{equation}
\mu_{g \circ f} \equiv \frac{\mu_f+(\overline{f_z}/f_z) ((-f_z/\overline{f_z}) \mu_f)}{1+(\overline{f_z}/f_z)  \overline{\mu_f} ((-f_z/\overline{f_z}) \mu_f)} \equiv 0.
\end{equation}
Hence $g \circ f$ is conformal.
\end{thm}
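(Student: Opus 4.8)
\section*{Proof proposal}

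The plan is to compute the Beltrami coefficient $\mu_{g \circ f} = (g \circ f)_{\bar z}/(g \circ f)_z$ directly from the Wirtinger chain rule and then reorganize the result into the Beltrami coefficients of the two factors. Setting $w = f(z)$ and applying the chain rule for the Wirtinger operators, together with the two standard identities $\partial_z \bar f = \overline{f_{\bar z}}$ and $\partial_{\bar z} \bar f = \overline{f_z}$, I would obtain
\begin{equation}
\begin{gathered}
(g\circ f)_z = (g_w\circ f)\,f_z + (g_{\bar w}\circ f)\,\overline{f_{\bar z}},\\
(g\circ f)_{\bar z} = (g_w\circ f)\,f_{\bar z} + (g_{\bar w}\circ f)\,\overline{f_z}.
\end{gathered}
\end{equation}

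Next I would take the quotient of these two expressions and factor $(g_w\circ f)$ out of both numerator and denominator so that it cancels, leaving a ratio that involves only $f_z$, $f_{\bar z}$, their conjugates, and the factor $\mu_g \circ f = (g_{\bar w}/g_w)\circ f$. Substituting $\mu_f = f_{\bar z}/f_z$ (equivalently $f_{\bar z} = \mu_f f_z$ and $\overline{f_{\bar z}} = \overline{\mu_f}\,\overline{f_z}$) and then dividing numerator and denominator through by $f_z$ would collapse the expression into exactly the claimed formula, with the factor $\overline{f_z}/f_z$ emerging naturally as a ratio of conjugate derivatives. Once the chain rule is in place this step is a routine algebraic simplification.

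I expect the main obstacle to be bookkeeping rather than conceptual depth: the Wirtinger chain rule for a non-holomorphic composition mixes $g_w$ and $g_{\bar w}$, so one must track carefully which term carries $\overline{f_z}$ versus $\overline{f_{\bar z}}$, and remember that $\mu_g$ is evaluated at $f(z)$ (hence the composition $\mu_g \circ f$) rather than at $z$. For the concluding special case, I would apply the just-proved formula to the identity $f^{-1}\circ f = \mathrm{id}$, whose Beltrami coefficient vanishes identically; forcing the numerator to be zero yields the stated inverse relation $\mu_{f^{-1}} \circ f = -(f_z/\overline{f_z})\,\mu_f$. Substituting this relation together with the hypothesis $\mu_{f^{-1}} \equiv \mu_g$ back into the composition formula makes the numerator equal to $\mu_f + (\overline{f_z}/f_z)\bigl(-(f_z/\overline{f_z})\mu_f\bigr) = 0$, so $\mu_{g\circ f} \equiv 0$ and $g\circ f$ is conformal.
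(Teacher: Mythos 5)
Your computation is correct: the Wirtinger chain rule identities you wrote down are the right ones, the cancellation of $g_w\circ f$ and the substitution $\overline{f_{\bar z}} = \overline{\mu_f}\,\overline{f_z}$ do collapse the quotient into the claimed formula, and deriving the inverse relation by applying the formula to $f^{-1}\circ f = \mathrm{id}$ (the numerator must vanish because the denominator is bounded away from zero when $|\mu_f|,|\mu_{f^{-1}}\circ f|<1$) is a clean way to get the final conclusion. However, note that the paper does not prove this theorem at all: its ``proof'' is a one-line citation to Gardiner--Lakic, \emph{Quasiconformal Teichm\"uller Theory}. So your proposal supplies the standard textbook argument that the citation points to, rather than mirroring anything in the paper. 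Two technical points that the cited reference handles and that your sketch silently assumes: (i) quasi-conformal maps are in general only weakly differentiable (in $W^{1,2}_{\mathrm{loc}}$), so the Wirtinger chain rule for the composition holds almost everywhere and requires justification beyond the smooth case; (ii) the last step, from $\mu_{g\circ f}\equiv 0$ to ``$g\circ f$ is conformal,'' is Weyl's lemma, not a triviality. If you present your argument as a proof for smooth ($C^1$) quasi-conformal maps, it is complete as written; for the general statement you would need to invoke those two facts.
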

\begin{proof}
 See \cite{Gardiner00}. \qed
\end{proof}

In other words, by composing two quasi-conformal maps whose Beltrami coefficients satisfy the above condition, one can immediately obtain a conformal map. This observation motivates the following step.

To enforce the circular boundary of the parameterization, we first normalize the boundary of the region $R$ to the unit circle:
\begin{equation} \label{eqt:normalization}
 v \mapsto \frac{v}{|v|}
\end{equation}
for all $v \in \partial R$. Denote the normalized region by $\widetilde{R}$. Since the vertices near the boundary of the region $R$ may be very dense, a direct normalization of the boundary may cause overlaps of the triangulations as well as geometric distortions on the unit disk. To eliminate the overlaps and the distortions of $\widetilde{R}$, we apply the linear Beltrami solver to construct another quasi-conformal map with the normalized boundary constraints. Then by the composition property, the composition map becomes a conformal map. More specifically, denote the Beltrami coefficient of the mapping $g: \widetilde{R} \to M$ from the normalized planar region to the original surface by $\mu$. We reconstruct a quasi-conformal map with Beltrami coefficient $\mu$ on the unit disk by extending the linear Beltrami solver, so that it is applicable not only on rectangular domains but also circular domains. We compute a map $h: \widetilde{R} \to \mathbb{D}$ by applying the linear Beltrami solver:
\begin{equation}
 h = \textbf{LBS}(\mu)
\end{equation}
with the circular boundary constraint $h(v) = v \text{ for all } v \in \partial \widetilde{R}$.

Note that by the composition property stated in Theorem \ref{composition}, $h \circ g^{-1}$ is a conformal map from the original surface $M$ to the unit disk $\mathbb{D}$. Finally, the bijectivity of the composition map is supported by Theorem \ref{bijectivity}, since the Beltrami coefficient of the composition map is with supremum norm less than 1. This completes the task of finding a bijective disk conformal parameterization. The numerical implementation of our proposed method is explained in Section \ref{implementation}.

\section{Numerical Implementation} \label{implementation}
In this section, we describe the numerical implementation of our proposed algorithm in details. In the discrete case, 3D surfaces are commonly represented by triangular meshes. Discrete analogs of the theories on the smooth surfaces are developed on the triangulations.

We first briefly describe the discrete version of the mentioned double covering technique for obtaining a genus-0 closed mesh. This discretization was also applied in \cite{Gu03} to compute conformal gradient fields of surfaces with boundaries. A triangulation $K=(V,E,F)$ of a smooth simply-connected open surface consists of the vertex set $V$, the edge set $E$ and the triangular face set $F$. Each face can be represented as an ordered triple $[u,v,w]$ where $u,v,w$ are three vertices. Suppose the boundary vertices of $K$ are $\{w_i\}_{i=1}^r$. Denote a duplicated triangulation of $K$ by $K' = (V',E',F')$, with boundary vertices $\{w'_i\}_{i=1}^r$. Let $v,e,f$ denote the number of vertices, edges and faces of $M$ respectively. We duplicate $M$ and denote the new copy by $M'$, with $v',e',f'$ the number of vertices, edges and faces. By Euler's formula, we have
\begin{equation}
 v-e+f = v'-e'+f' = 1.
\end{equation}
Then, we construct a new surface $\widetilde{M}$ by reversing the face orientation of $M'$ and gluing the two boundaries $\partial M$ and $\partial M'$. To reverse the orientation of $K'$, we rearrange the order of the vertices of each face in $F'$ from $[u,v,w]$ to $[u,w,v]$. Then, to glue the two surfaces, we identify the boundary vertices of the two meshes:
\begin{equation}
w_i \longleftrightarrow w'_i \text{ for all } i=1,2,\cdots,r.
\end{equation}
Now, denote the number of vertices, edges and faces of $\widetilde{M}$ by $\tilde{v}$, $\tilde{e}$ and $\tilde{f}$ respectively. It follows that
\begin{equation}
 \begin{array}{ll}
  &\tilde{v} - \tilde{e} + \tilde{f} \\
  =&(v+v' - r) - (e+e' - r) + (f+f') \\
  =& v + v' - e - e' + f + f' \\
  =&2,
 \end{array}
\end{equation}
Hence, the new surface $\widetilde{M}$ is a genus-0 closed surface. The complete double covering procedure is described in Algorithm \ref{alg:double}.

\begin{algorithm}[h]
\KwIn{A triangulation $K=(V,E,F)$ of a simply-connected open surface.}
\KwOut{A genus-0 closed mesh $\widetilde{K} = (\widetilde{V},\widetilde{E},\widetilde{F})$.}
\BlankLine
Duplicate $K$ and denote the copy by $K' = (V',E',F')$\;
Change the order of the vertices of each face in $F'$ from  $[u,v,w]$ to $[u,w,v]$\;
Replace the boundary vertices $w'_i$ by $w_i$ in $E'$ and $F'$ for $i=1,2,\cdots,r$\;
Set $\widetilde{V} = V \cup V' \setminus \{w'_i\}_{i=1}^r$\;
Set $\widetilde{E} = E \cup E'$\;
Set $\widetilde{F} = E \cup E'$\;
\caption{The double covering technique in our initial step.}
\label{alg:double}
\end{algorithm}

Then, we introduce the discretization about harmonic mappings used in the fast spherical conformal parameterization algorithm in \cite{Choi15}. Let $\widetilde{K}$ be the triangulation of a genus-0 closed surface $N$. Let us denote the edge spanned by two vertices $u,v$ on $\widetilde{K}$ by $[u,v]$. The {\it discrete harmonic energy} of $\psi:\widetilde{K} \to \mathbb{S}^2$ is given by
\begin{equation}
E(\psi) = \frac{1}{2} \sum_{[u,v] \in \widetilde{K}} k_{uv} ||\psi(u)-\psi(v)||^2,
\end{equation}
where $k_{uv} = \cot \alpha + \cot \beta$ with $\alpha,\beta$ being the angles opposite to the edge $[u,v]$. This is known as the cotangent formula \cite{Pinkall93}.
With the discrete harmonic energy, we can appropriately discretize the Laplace-Beltrami operator as
\begin{equation}
\Delta \psi(v_i) = \sum_{v_j \in N(v_i)} k_{v_i v_j} (\psi(v_j)-\psi(v_i)),
\end{equation}
where $N(v_i)$ denotes the set of the 1-ring neighboring vertices of $v_i$. Therefore, the Laplace Equation (\ref{eqt:laplace}) becomes a linear system in the form
\begin{equation}
\label{eqt:laplace_linear}
Az = b,
\end{equation}
where $A$ is a square matrix satisfying
\begin{equation}
\displaystyle
  A_{ij} = \left\{
 \begin{array}{ll}
 k_{v_i v_j} & \text{ if } i \neq j, \\
 -\sum_{v_t \in N(v_i)} k_{v_i v_t} & \text{ if } i = j\\
 \end{array} \right. \text{ for all }i,j.
\end{equation}
For the choice of the points $a_1,a_2,a_3,b_1,b_2,b_3$ in the boundary constraint of Equation (\ref{eqt:laplace}), we choose a triangular face $[a_1,a_2,a_3]$ on the triangulation $K$ and a triangle $[b_1,b_2,b_3]$ on the complex plane that shares the same angle structure as $[a_1,a_2,a_3]$. Note that the above linear system is sparse and symmetric positive definite. Therefore, it can be efficiently solved.

It is noteworthy that due to the symmetry of the double covered surface, for every edge $[u,v]$ in the triangulation $K$, there exists a unique edge $[u',v']$ in the duplicated triangulation $K'$ such that
\begin{equation}
 k_{uv} = \cot \alpha + \cot \beta = \cot \alpha' + \cot \beta' = k_{u'v'},
\end{equation}
where $\alpha,\beta$ are the angles opposite to the edge $[u,v]$ and $\alpha',\beta'$ are the angles opposite to the edge $[u',v']$. Therefore, only half of the vertices and faces are needed for computing the whole coefficient matrix $A$ to solve the Laplace Equation (\ref{eqt:laplace}). More explicitly, Equation (\ref{eqt:laplace_linear}) can be expressed as the following form:
\begin{equation}
 \begin{pmatrix}
  B & * &  0 \\
  * & * & * \\
  0 & * & B
 \end{pmatrix}
 \begin{pmatrix}
  z_1 \\
  \vdots \\
  z_{|V|-r} \\
  \zeta_1 \\
  \vdots \\
  \zeta_r \\
  z'_1 \\
  \vdots \\
  z'_{|V|-r} \\
 \end{pmatrix}
= b.
\end{equation}
Here, $\{z_i\}_{i=1}^{|V|-r}$ and $\{z'_i\}_{i=1}^{|V|-r}$ are respectively the coordinates of the non-boundary vertices of $K$ and $K'$, $\{\zeta_i\}_{i=1}^{r}$ are the coordinates of the glued vertices, and $B$ is a $(|V|-r) \times (|V|-r)$ sparse symmetric positive definite matrix. It follows that we can save half of the computational cost in finding all the cotangent weights $k_{uv}$ in $A$. Hence, the computation of the spherical conformal map is efficient even the number of vertices and faces is doubled under the double covering step.

Another important mathematical tool in our proposed algorithm is the quasi-conformal mapping. Quasi-conformal mappings are closely related to the Beltrami coefficients. It is important to establish algorithms for computing the Beltrami coefficient associated with a given quasi-conformal map, as well as for computing the quasi-conformal map associated with a given Beltrami coefficient.

We first focus on the computation of the Beltrami coefficients. In the discrete case, suppose $K_1, K_2 \subset \mathbb{R}^3$ are two triangular meshes with the same number of vertices, faces and edges, and $f:K_1 \to K_2$ is an orientation preserving piecewise linear homeomorphism. It is common to discretize the Beltrami coefficient on the triangular faces. To compute the Beltrami coefficient $\mu_f$ associated with $f$, we compute the partial derivatives on every face $T_1$ on $K_1$.

Suppose $T_1$ on $K_1$ corresponds to a triangular face $T_2$ on $K_2$ under the mapping $f$. The approximation of $\mu_f$ on $T_1$ can be computed using the coordinates of the six vertices of $T_1$ and $T_2$. Since the triangulations are piecewise linear, we can place $T_1$ and $T_2$ on $\mathbb{R}^2$ using suitable rotations and translations to simplify the computations. Hence, without loss of generality, we can assume that $T_1$ and $T_2$ are on $\mathbb{R}^2$. Specifically, suppose $T_1 = [a_1+i\ b_1, a_2+ i\ b_2, a_3 + i\ b_3]$ and $T_2 = [w_1, w_2, w_3]$, where $a_1,a_2,a_3,b_1,b_2,b_3 \in \mathbb{R}$, and $w_1,w_2,w_3 \in \mathbb{C}$. Recall that $\frac{\partial f}{\partial \bar{z}} = \frac{1}{2} \left(\frac{\partial f}{\partial x} + i \frac{\partial f}{\partial y}\right)$ and $\frac{\partial f}{\partial z} = \frac{1}{2} \left(\frac{\partial f}{\partial x} - i \frac{\partial f}{\partial y}\right)$. Hence, to discretize the Beltrami coefficient, we only need to compute $\frac{\partial}{\partial x}$ and $\frac{\partial}{\partial y}$ on every triangular face $T_1$. It is natural to use the differences between the vertex coordinates for the approximation. We define
\begin{equation}
D_x = \frac{1}{2 Area(T_1)} \left( \begin{array}{c} b_3-b_2 \\ b_1-b_3 \\ b_2-b_1 \end{array} \right)^t \ \ \mathrm{and}\ \
D_y = -\frac{1}{2 Area(T_1)} \left( \begin{array}{c} a_3-a_2 \\ a_1-a_3 \\ a_2-a_1 \end{array} \right)^t.
\end{equation}
Then, we can compute the Beltrami coefficient
\begin{equation}
\mu_f(z) = \frac{\partial f}{\partial \bar{z}} \left/ \frac{\partial f}{\partial z}\right.
\end{equation}
on $T_1$ by
\begin{equation}\label{eqt:discreteBC}
\mu_f(T_1) = \frac{\frac{1}{2} \left(D_x + i\ D_y \right) \left( \begin{array}{c} w_1 \\ w_2 \\ w_3 \end{array} \right) }{\frac{1}{2} \left(D_x - i\ D_y \right) \left( \begin{array}{c} w_1 \\ w_2 \\ w_3 \end{array} \right) }.
\end{equation}

This approximation is easy to compute. Hence, it is convenient to obtain the Beltrami coefficient associated with a given quasi-conformal map in the discrete case. A relatively complicated task is to compute the quasi-conformal map $f$ associated with a given Beltrami coefficient $\mu_f$. To achieve this, we apply the {\bf LBS} \cite{Lui13} to reconstruct a quasi-conformal map from a given Beltrami coefficient, with the boundary vertices of the disk fixed. We now briefly explain the discretization of the {\bf LBS}.

Recall that the quasi-conformal map associated with a given Beltrami coefficient can be obtained by solving Equation (\ref{eqt:BeltramiPDE}). The key idea of {\bf LBS} is to discretize Equation (\ref{eqt:BeltramiPDE}) into sparse SPD linear systems of equations so that the solution can be efficiently computed.

For each vertex $v_i$, let $N_i$ be the collection of the neighboring faces attached to $v_i$. Let $T = [v_i,v_j, v_k]$ be a face ,$w_i = f(v_i), w_j = f(v_j)$ and $w_k = f(v_k)$. Suppose $v_l = g_l + i\ h_l$ and $w_l = s_l + i\ t_l$, for $l=i,j,k$. We further denote the Beltrami coefficient of the face $T$ obtained from Equation (\ref{eqt:discreteBC}) by $\mu_f(T) = \rho_T + i\ \eta_T$. Equation (\ref{eqt:alpha123}) can be discretized as follows:
\begin{equation}
\begin{split}
 \alpha_1(T) &= \frac{(\rho_T -1)^2 + \eta_T^2}{1-\rho_T^2 - \eta_T^2},\\
 \alpha_2(T) &= -\frac{2\eta_T}{1-\rho_T^2 - \eta_T^2},\\
 \alpha_3(T) &= \frac{1+2\rho_T+\rho_T^2 +\eta_T^2}{1-\rho_T^2 - \eta_T^2}.\\
\end{split}
\end{equation}

Then, Lui {\it et al.} \cite{Lui13} introduced the discrete divergence to compute the divergence operator. Define $A_i^T, A_j^T, A_k^T, B_i^T, B_j^T, B_k^T$ by
\begin{equation}
 \left(\begin{array}{c}
  A_i^T\\
  A_j^T\\
  A_k^T
 \end{array}\right) = \frac{1}{Area(T)}
\left(\begin{array}{c}
  h_j-h_k\\
  h_k-h_i\\
  h_i-h_j
 \end{array}\right) \text{ and }
 \left(\begin{array}{c}
  B_i^T\\
  B_j^T\\
  B_k^T
 \end{array}\right) = \frac{1}{Area(T)}
\left(\begin{array}{c}
  g_k-g_j\\
  g_i-g_k\\
  g_j-g_i
 \end{array}\right).
\end{equation}
Then the discrete divergence of a discrete vector field $\vec{V} = (V_1,V_2)$ on the triangular faces can be defined by
\begin{equation}
  Div(\vec{V})(v_i) = \sum_{T \in N_i} A_i^T V_1(T)+ B_i^T V_2(T).
\end{equation}
With the discrete divergence, Equation (\ref{eqt:BeltramiPDE}) can be discretized into the following linear system:
\begin{equation}\label{eqt:linearB12}
\begin{split}
\sum_{T\in N_i} A_i^T[\alpha_1(T) a_T + \alpha_2(T) b_T]+B_i^T[\alpha_2(T) a_T + \alpha_3(T) b_T]  = 0\\
\sum_{T\in N_i} A_i^T[\alpha_1(T) c_T + \alpha_2(T) d_T]+B_i^T[\alpha_2(T) c_T + \alpha_3(T) d_T] = 0
\end{split}
\end{equation}
for all vertices $v_i$. Here $a_T$, $b_T$, $c_T$ and $d_T$ are certain linear combinations of the $x$-coordinates and $y$-coordinates of the desired quasi-conformal map $f$. Hence, we can obtain the $x$-coordinate and $y$-coordinate function of $f$ by solving the linear system in Equation (\ref{eqt:linearB12}). For more details, please refer to \cite{Lui13}.

With the aforementioned discrete analogs and algorithms, we are ready to describe the numerical implementation of our proposed algorithm for disk conformal parameterizations of simply-connected open surfaces.

Given a triangular mesh $M$ of disk topology, we first apply the double covering technique as described in Algorithm \ref{alg:double}. Then, we apply the fast spherical conformal parameterization algorithm \cite{Choi15} on the glued genus-0 closed mesh $\widetilde{M}$. To obtain a planar parameterization from the obtained parameter sphere $S$, we apply a M\"obius transformation to locate the part corresponding to one of the two copies on the southern hemisphere. After that, by the stereographic projection, the Southern hemisphere is mapped to a planar region $R$ in the complex plane. Note that this planar region is usually a bit different from a perfect disk due to the discretization and approximation errors. Therefore, we normalize the boundary of $R$ to enforce a circular boundary as in Equation (\ref{eqt:normalization}). However, this step may causes overlaps as well as conformality distortions on the normalized region $\widetilde{R}$.

To overcome these issues, we apply the idea of the composition of quasi-conformal maps in Theorem \ref{composition}. We first compute the Beltrami coefficient of the quasi-conformal map  $g:\widetilde{R} \to M$ by solving Equation (\ref{eqt:discreteBC}) on all triangular faces of $\widetilde{R}$. Denote the Beltrami coefficient by $\mu$. We then reconstruct another quasi-conformal map $h: \widetilde{R} \to \mathbb{D}$ with the given Beltrami coefficient $\mu$, using {\bf LBS} \cite{Lui13}. Note that boundary constraints are needed in solving Equation (\ref{eqt:linearB12}). In our case, we give the Dirichlet condition on the whole boundary of the normalized disk $\widetilde{R}$. More explicitly, the boundary condition used in solving Equation (\ref{eqt:linearB12}) for the map $h: \widetilde{R} \to \mathbb{D}$ is
\begin{equation}
 h(v) = v
\end{equation}
for all boundary vertices $v \in \partial \widetilde{R}$. After obtaining $h$, we can conclude that the composition map $f = h \circ g^{-1}$ is the desired disk conformal parameterization. The complete implementation of our algorithm is described in Algorithm \ref{algorithm}.

\begin{algorithm}[h]
\KwIn{A simply-connected open mesh $M$.}
\KwOut{A bijective disk conformal parameterization $f:M \to \mathbb{D}$.}
\BlankLine
Double cover $M$ and obtain a genus-0 closed mesh $\widetilde{M}$\;
Apply the fast spherical conformal parameterization \cite{Choi15} on $\widetilde{M}$ and obtain the parameter sphere $S$\;
Apply a M\"obius transformation on $S$ so that the original surface $M$ corresponds to the southern hemisphere of $S$\;
Using the stereographic projection for the southern hemisphere of $S$, obtain a planar region $R$\;
Normalize the boundary of $R$:
\begin{equation}
    v \mapsto \frac{v}{|v|}
\end{equation}
for all $v \in \partial R$ and denote the normalized region by $\widetilde{R}$\;
Compute the Beltrami coefficient of the map $g:\widetilde{R} \to M$ and denote it by $\mu$\;
Apply the linear Beltrami solver \cite{Lui13} to obtain a map $h: \widetilde{R} \to \mathbb{D}$
\begin{equation}
 h = \textbf{LBS}(\mu)
\end{equation}
with the boundary constraint $h(v) =v$ for all $v \in \partial \widetilde{R}$. The composition map $f = h \circ g^{-1}$ is the desired disk conformal parameterization\;
\caption{Our proposed linear formulation for disk conformal parameterization}
\label{algorithm}
\end{algorithm}

\section{Experimental results} \label{experiments}

\begin{figure}[t]
\begin{center}
   \includegraphics[width=0.45\linewidth]{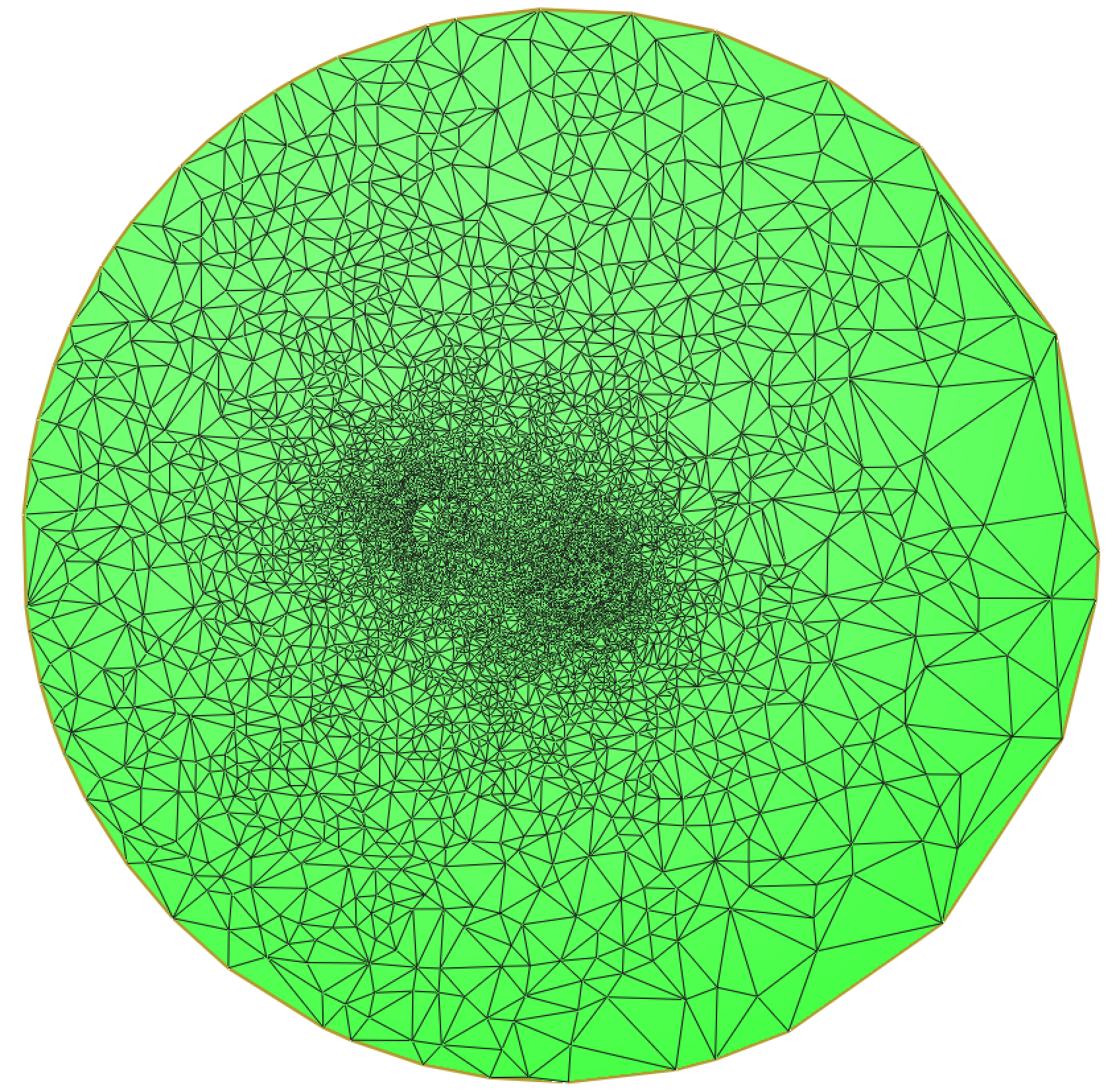}
\end{center}
   \caption{The disk conformal parameterization of the foot model in Figure \ref{fig:foot_disk_unnormalized} using our proposed algorithm.}
\label{fig:foot}
\end{figure}

\begin{figure}[t]
\begin{center}
   \includegraphics[width=0.45\linewidth]{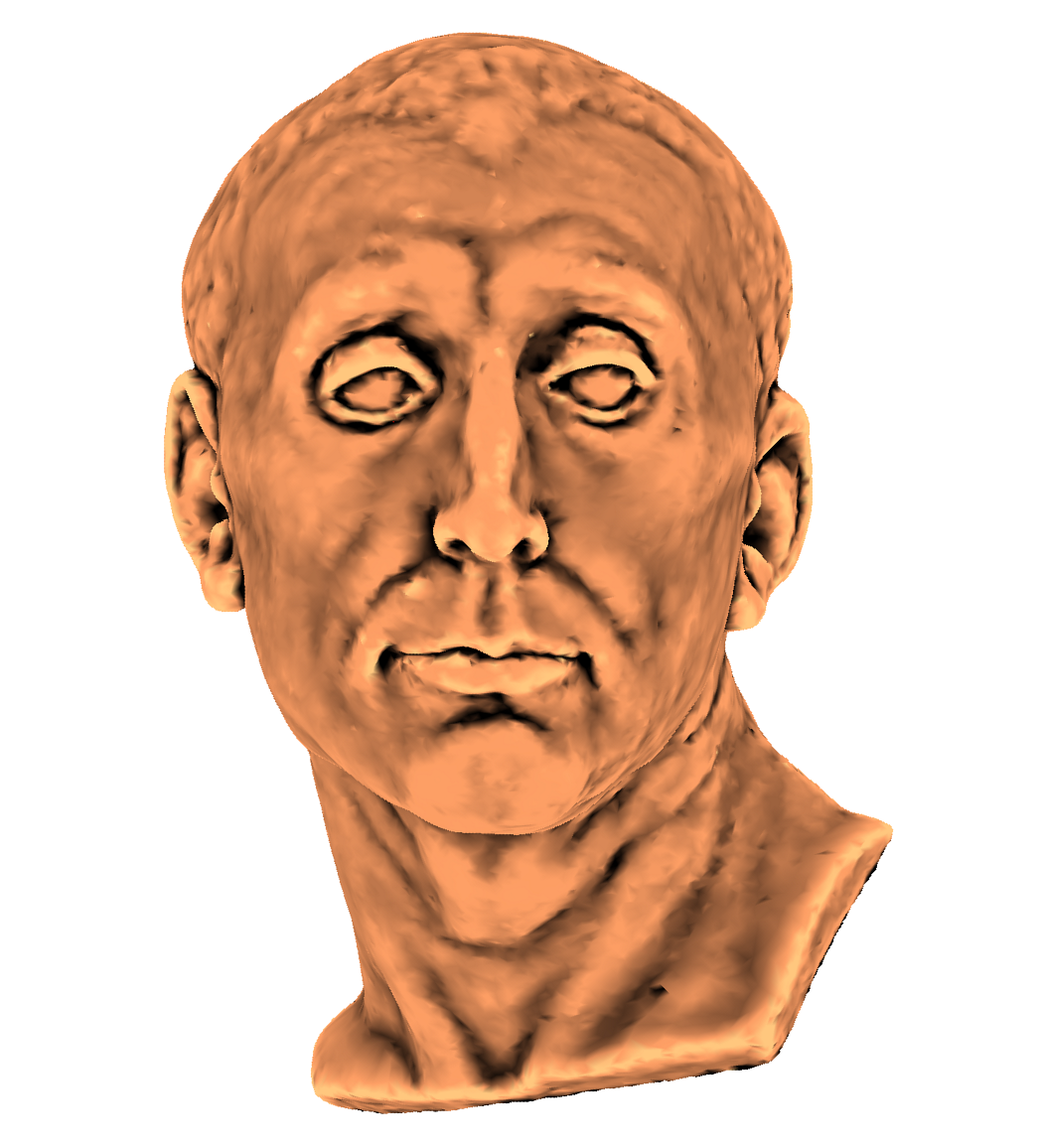}
   \includegraphics[width=0.45\linewidth]{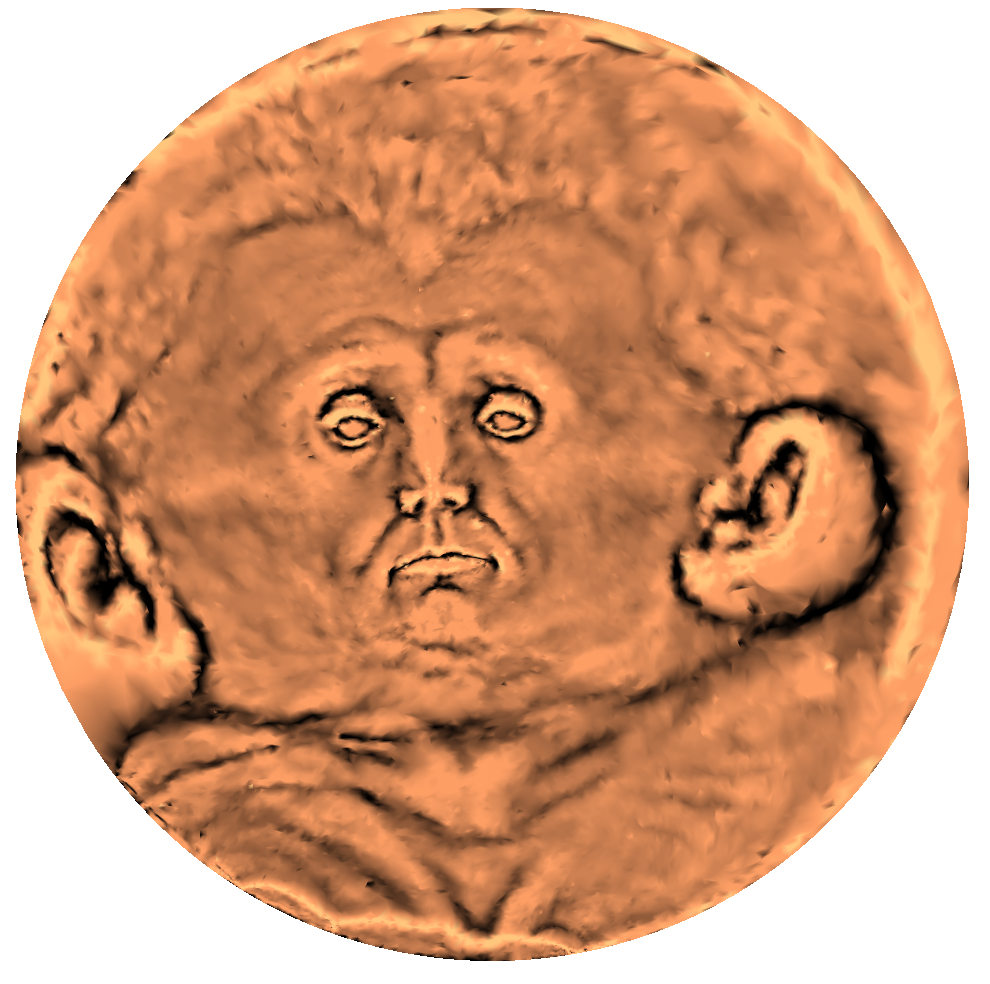}
\end{center}
   \caption{A simply-connected open statue model and the disk conformal parameterization obtained by our proposed algorithm. The color represents the mean curvature of the model.}
\label{fig:statue}
\end{figure}

\begin{figure}[t]
\begin{center}
   \includegraphics[width=0.45\linewidth]{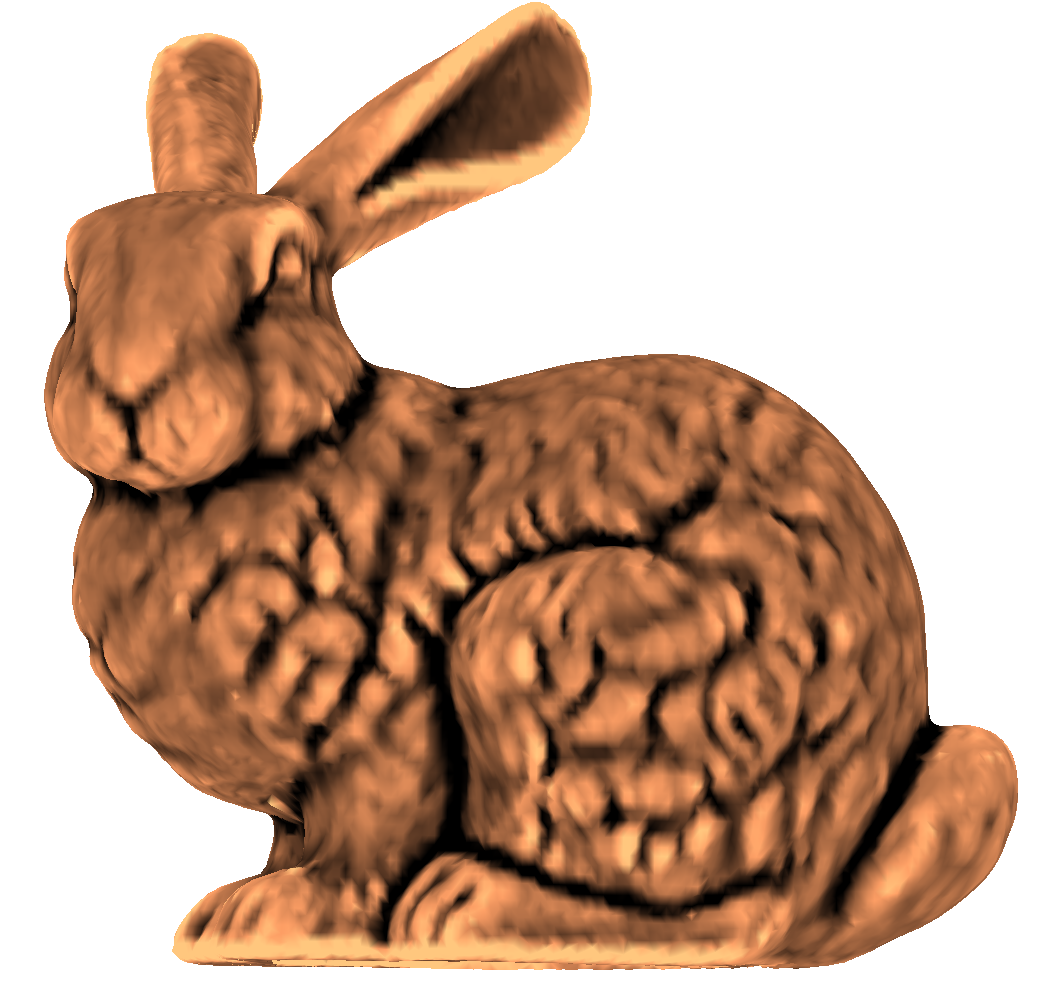}
   \includegraphics[width=0.45\linewidth]{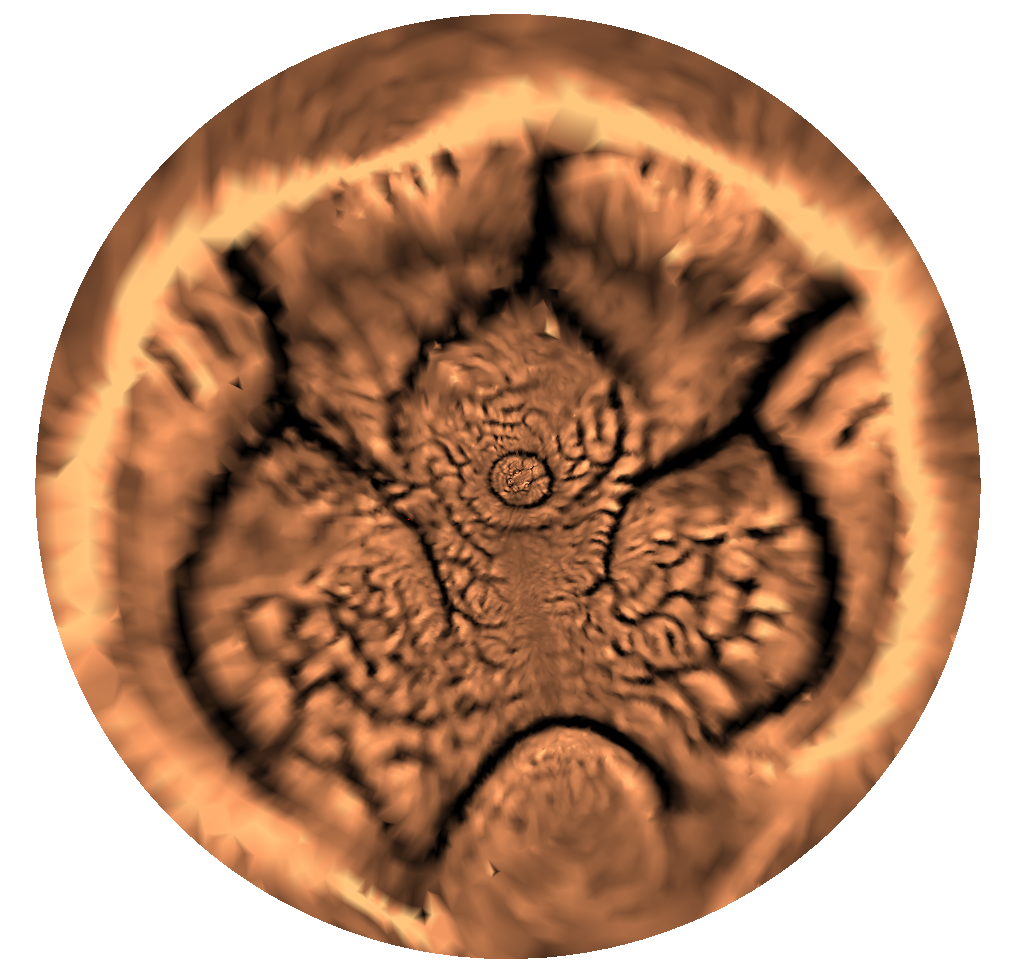}
\end{center}
   \caption{A simply-connected open Stanford bunny model and the disk conformal parameterization obtained by our proposed algorithm. The color represents the mean curvature of the model.}
\label{fig:bunny}
\end{figure}

\begin{figure}[t]
\begin{center}
\includegraphics[width=0.45\linewidth]{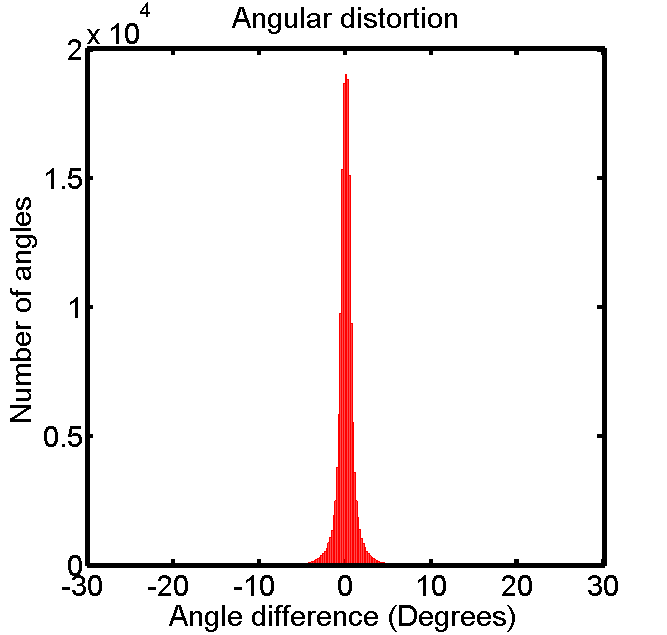}
\includegraphics[width=0.45\linewidth]{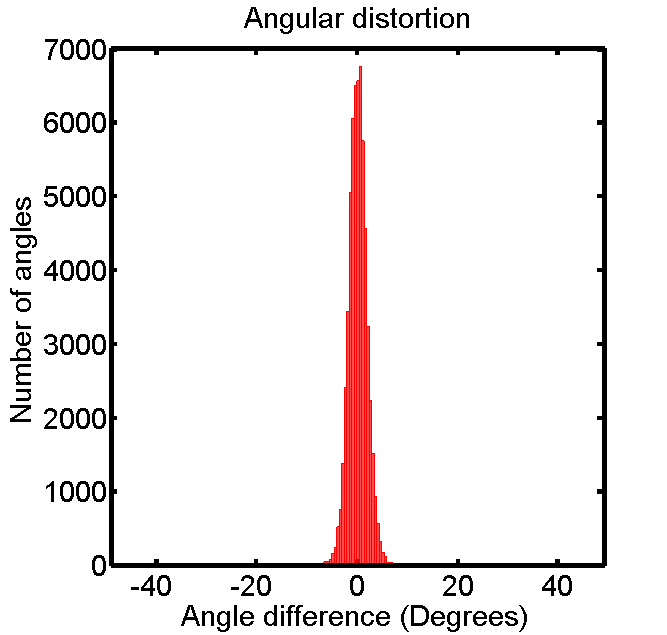}
\includegraphics[width=0.45\linewidth]{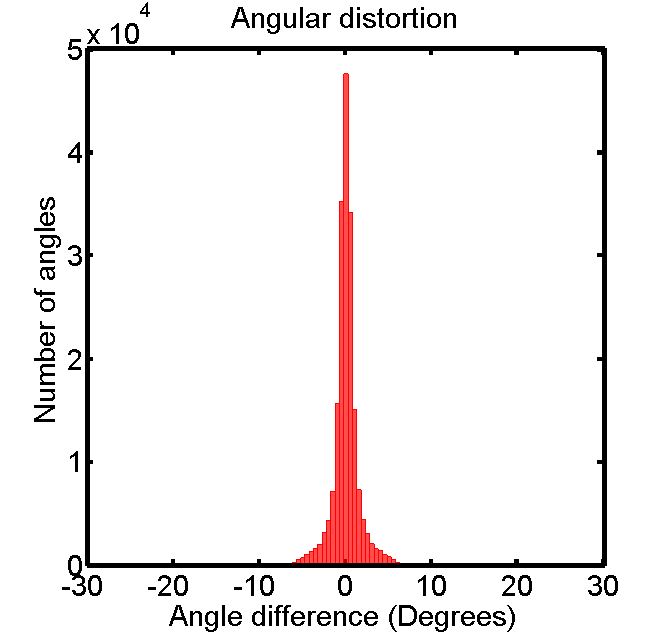}
\end{center}
   \caption{The histograms of the angular distortion of our parameterization algorithm. Here the angular distortion refers to the difference (in degrees) between the new angle after the parameterization and its corresponding original angle on the mesh. Top left: The result of statue. Top right: The result of foot. Bottom: The result of Stanford bunny.}
\label{fig:histogram}
\end{figure}

In this section, we demonstrate the effectiveness of our proposed algorithm using a number of 3D simply-connected open meshes. The meshes are freely adapted from mesh repositories such as the AIM@SHAPE Shape Repository \cite{aim@shape}, the Stanford 3D Scanning Repository \cite{stanford} and the Benchmark for 3D Mesh Segmentation \cite{princeton}. Our proposed algorithm and the two-step iterative approach \cite{Choi15b} are implemented in MATLAB. The sparse linear systems in our proposed algorithm are solved using the backslash operator (\textbackslash) in MATLAB. For the other existing algorithms in our comparison, we adopt available online software for the computation. All experiments are performed on a PC with a 3.40 GHz quad core CPU and 16 GB RAM.

We apply our proposed algorithm on various kinds of simply-connected open surfaces with different geometry. Figure \ref{fig:foot} shows the disk conformal parameterization of the foot model in Figure \ref{fig:foot_disk_unnormalized} using our proposed method. Figure \ref{fig:statue} and \ref{fig:bunny} respectively show a simply-connected open statue model and a Stanford bunny model, together with the disk conformal parameterizations of them obtained by our proposed method. For a better visualization of the parameterization results, the triangular faces of the meshes are colored with the mean curvatures of the models. The disk parameterizations well resemble the original models locally. The histograms of the angular distortion of our proposed algorithm are shown in Figure \ref{fig:histogram}. It can be observed from the peaks of the histograms that the angle differences highly concentrate at 0. This indicates that our proposed algorithm produces disk conformal parameterizations with minimal distortions for different kinds of simply-connected open surfaces. As for the efficiency, our proposed algorithm takes only around 1 second for moderate meshes. The computation can also be complete within half a minute for dense meshes.

To quantitatively assess the performance of our proposed algorithm for disk conformal parameterizations, we compare our proposed algorithm with the existing methods. In particular, the comparisons performed in \cite{Choi15b} suggest that the holomorphic 1-form method \cite{Gu02} and the two-step iterative approach \cite{Choi15b} achieve the best conformality among all state-of-the-art approaches as well as the bijectivity. Therefore, the holomorphic 1-form method and the two-step iterative approach are considered in our comparisons. The implementation of the holomorphic 1-form method is included in the RiemannMapper Toolkit \cite{riemannmapper} written in C++ and the two-step iterative approach is implemented in MATLAB. The error thresholds in both the holomorphic 1-form method and the two-step iterative approach are set to be $\epsilon=10^{-5}$.

\begin{table*}
    \begin{center}
    \begin{tabular}{ |C{15mm}|C{10mm}|C{30mm}|C{30mm}|C{30mm}| }
    \hline
    Surfaces & No. of faces  & Holomorphic 1-form \cite{Gu02} & Two-step iteration \cite{Choi15b} & Our proposed method\\
    \cline{3-5}
    && \multicolumn{3}{ c| }{Time (s) / Mean($|distortion|$) (degrees) / SD($|distortion|$) (degrees)}\\ \hline
    Horse & 9K & fail & 0.76 / 4.58 / 5.45 & 0.18 / 4.60 / 5.49 \\ \hline
    T-shirt & 14K & 18.64 / 1.38 / 3.26 & 2.09 / 1.34 / 3.25 & 0.34 / 1.35 / 3.26 \\ \hline
    Foot & 20K & 11.73 / 1.40 / 1.24 & 1.75 / 1.42 / 1.22 & 0.47 / 1.42 / 1.22\\ \hline
    Chinese lion & 30K & 29.87 / 1.44 / 2.05 & 2.70 / 1.42 / 2.04 & 0.92 / 1.42 / 2.05\\ \hline
    Sophie & 40K & 28.29 / 0.36 / 0.61 & 5.87 / 0.34 / 0.60 & 1.31 / 0.35 / 0.60\\ \hline
    Bimba & 50K & 28.04 / 1.29 / 1.78 & 2.22 / 1.22 / 1.74 & 1.32 / 1.22 / 1.75\\ \hline
    Human face & 50K & 28.45 / 0.55 / 1.84 & 4.61 / 0.53 / 1.82 & 1.40 / 0.53 / 1.83\\ \hline
    Niccolo da Uzzano & 50K & 29.49 / 0.78 / 1.73 & 7.95 / 0.75 / 1.75 & 1.34 / 0.76 / 1.74\\ \hline
    Mask & 60K & fail & 7.08 / 0.25 / 0.33 & 1.93 / 0.25 / 0.33 \\ \hline
    Bunny & 70K & 40.14 / 1.08 / 1.80  & 4.18 / 1.08 / 1.79 & 1.99 / 1.08 / 1.79 \\ \hline
    Brain & 100K & 58.49 / 1.46 / 1.59 & 6.81 / 1.46 / 1.59 & 2.73 / 1.46 / 1.59 \\ \hline
    Lion vase & 100K & 93.64 / 1.44 / 1.91 & 5.34 / 1.27 / 1.75 & 2.64 / 1.27 / 1.76\\ \hline
    Max Planck & 100K & 75.92 / 0.61 / 0.80 & 6.58 / 0.61 / 0.80 & 2.88 / 0.61 / 0.80\\ \hline
    Hand & 110K & 63.90 / 1.40 / 1.99 & 7.51 / 1.21 / 1.31 & 3.30 / 1.21 / 1.31 \\ \hline
    Igea & 270K & 173.60 / 0.40 / 0.72 & 54.63 / 0.40 / 0.71 & 9.47 / 0.40 / 0.71 \\ \hline
    Julius Caesar & 430K & 295.63 / 0.21 / 0.67 & 65.54 / 0.20 / 0.67 & 19.51 / 0.20 / 0.67\\ \hline
    \end{tabular}
    \end{center}
    \caption{Performance of our proposed method and two other state-of-the-art algorithms. Here the distortion refers to the angular distortion of the parameterization, that is, the difference (in degrees) between the new angle and its corresponding original angle on the mesh.}
    \label{performance}
\end{table*}

The statistics of the performances of the algorithms are shown in Table \ref{performance}. It is noteworthy that the angular distortion of our proposed method is comparable and sometimes better than the two state-of-the-art algorithms \cite{Gu02,Choi15b}. This implies that our algorithm successfully produces low-distortion parameterizations. Moreover, our proposed method demonstrates a significant improvement in the computational time of the disk conformal parameterizations. Specifically, our proposed method is around 20 times faster than the holomorphic 1-form method on average. Also, our proposed method requires 60\% less computational time than the two-step iteration on average. The results illustrate the efficiency of our proposed algorithm. As a remark, in all experiments, our proposed algorithm generates folding-free parameterization results.

\begin{figure}[t]
\begin{center}
   \includegraphics[width=0.95\linewidth]{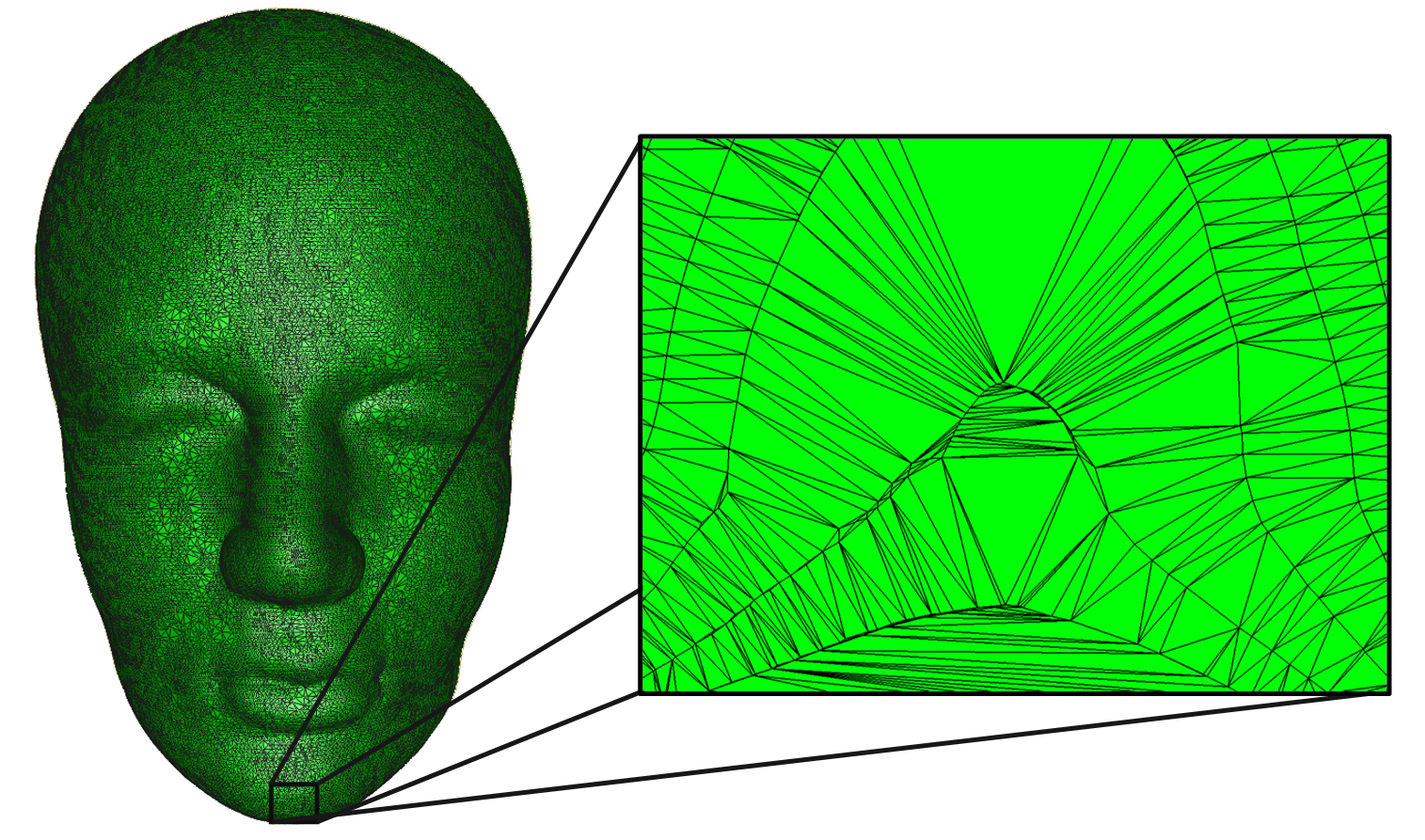}

\end{center}
   \caption{A mask model with a highly irregular triangulation. Sharp triangles can be easily observed in a zoom-in of the model. Our proposed method can handle this kind of irregular triangulations. Left: The mask model. Right: A zoom-in of the model.}
\label{fig:mask}
\end{figure}

Our proposed method is very robust to irregular triangulations. Figure \ref{fig:mask} shows a simply-connected mask model which has a very irregular triangulation. Sharp and irregular triangles can be easily observed in a zoom-in of the model. Note that the holomorphic 1-form method \cite{Gu02} fails to compute the disk conformal parameterization of this model while our proposed method succeeds (please refer to Table \ref{performance}). This demonstrates the robustness of our proposed method.

\begin{figure}[t]
\begin{center}
   \includegraphics[width=0.5\linewidth]{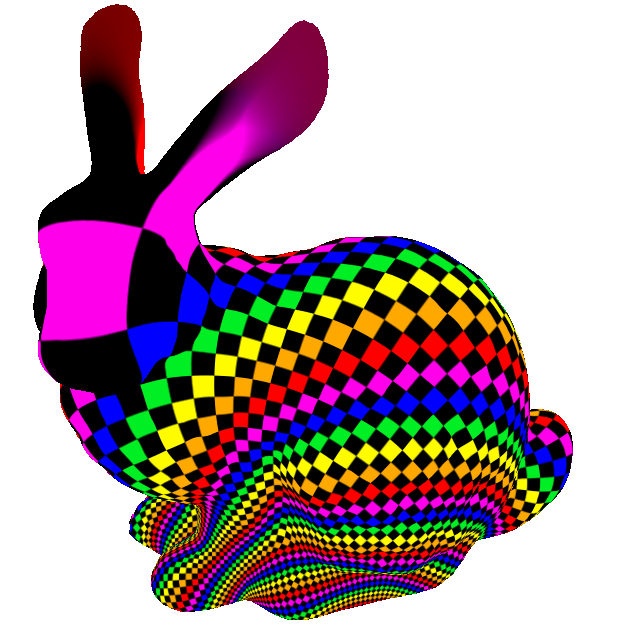}
\end{center}
   \caption{The Stanford bunny model with a rainbow checkerboard texture mapped onto it using our proposed parameterization algorithm.}
\label{fig:texture}
\end{figure}

\begin{figure}[t]
\begin{center}
   \includegraphics[width=0.5\linewidth]{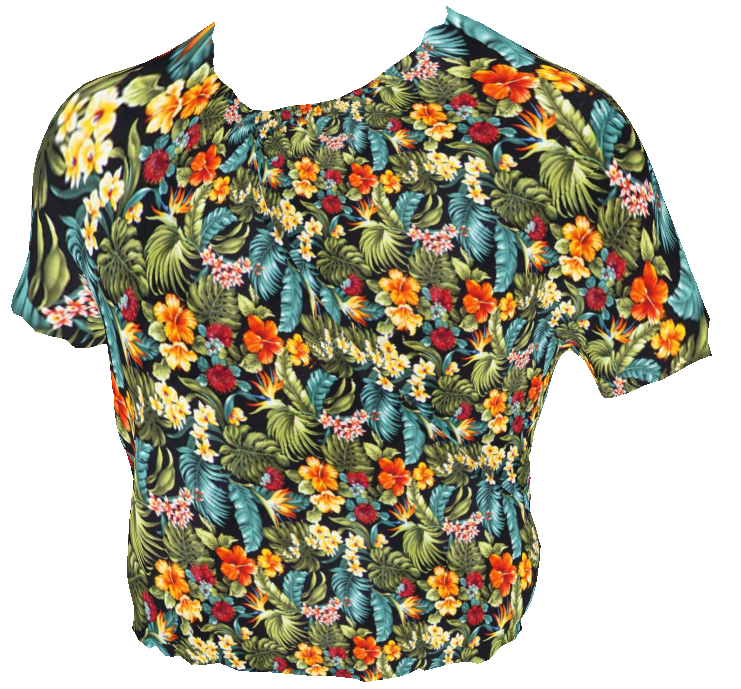}
\end{center}
   \caption{A T-shirt model with a flower pattern design mapped onto it using our proposed parameterization algorithm.}
\label{fig:cloth}
\end{figure}

In addition, to illustrate the accuracy of our proposed algorithm, we map a rainbow checkerboard texture onto a simply-connected bunny mesh using our proposed parameterization algorithm. In Figure \ref{fig:texture}, it can be easily observed that the orthogonal checkerboard structure is well preserved under the our proposed parameterization scheme. The preservation of the orthogonality indicates that our proposed algorithm is with negligibly low conformality distortion.

Because of the efficiency, accuracy, bijectivity and robustness, our proposed algorithm is highly practical in applications, such as texture mapping in computer graphics and fashion design. For instance, clothing designs can be easily visualized using our proposed algorithm. Figure \ref{fig:cloth} shows a T-shirt model with a 2D flower pattern design mapped onto it. It is noteworthy that the patterns are well preserved on the 3D T-shirt model. Unlike the traditional 2D images of clothes, the 3D models with our texture mapping technique provide a more comprehensive and realistic preview of the clothes for the designers. In addition, our proposed algorithm can be used in developing virtual dressing rooms. In virtual dressing rooms, it is desirable to have an efficient and accurate way for customers to virtually try on clothes. Our proposed algorithm can help creating such a virtual dressing experience for online shoppers.

\begin{table*}
    \begin{center}
    \begin{tabular}{ |C{20mm}|C{10mm}|C{45mm}|C{45mm}| }
    \hline
    Surfaces & No. of faces  & Our proposed method & Our proposed method without the South-pole step in \cite{Choi15} \\ 
    \cline{3-4}
    && \multicolumn{2}{ c| }{Time (s) / Mean($|distortion|$) (degrees) / SD($|distortion|$) (degrees)}\\ \hline
    Horse & 9K & 0.18 / 4.60 / 5.49 & 0.11 / 4.63 / 5.50\\ \hline 
    T-shirt & 14K & 0.34 / 1.35 / 3.26 & 0.22 / 1.37 / 3.27\\ \hline 
    Foot & 20K & 0.47 / 1.42 / 1.22 & 0.31 / 1.43 / 1.23\\ \hline 
    Chinese lion & 30K & 0.92 / 1.42 / 2.05 & 0.56 / 1.43 / 2.05\\ \hline 
    Sophie & 40K & 1.31 / 0.35 / 0.60 & 0.81 / 0.37 / 0.62\\ \hline 
    Bimba & 50K & 1.32 / 1.22 / 1.75 & 0.75 / 1.23 / 1.76\\ \hline
    Human face & 50K & 1.40 / 0.53 / 1.83 & 0.83 / 0.55 / 1.84\\ \hline
    Niccolo da Uzzano & 50K & 1.34 / 0.76 / 1.74 & 0.78 / 0.77 / 1.74\\ \hline
    Mask & 60K & 1.93 / 0.25 / 0.33 & 1.22 / 0.28 / 0.37\\ \hline
    Bunny & 70K & 1.99 / 1.08 / 1.79 & 1.14 / 1.09 / 1.79\\ \hline
    Brain & 100K & 2.73 / 1.46 / 1.59 & 1.55 / 1.49 / 1.60\\ \hline 
    Lion vase & 100K & 2.64 / 1.27 / 1.76 & 1.45 / 1.28 / 1.76\\ \hline
    Max Planck & 100K & 2.88 / 0.61 / 0.80 & 1.54 / 0.63 / 0.80\\ \hline 
    Hand & 110K & 3.30 / 1.21 / 1.31 & 1.74 / 1.22 / 1.31\\ \hline
    Igea & 270K & 9.47 / 0.40 / 0.71 & 5.22 / 0.54 / 0.74\\ \hline
    Julius Caesar & 430K & 19.51 / 0.20 / 0.67 & 12.41 / 0.21 / 0.68 \\ \hline
    \end{tabular}
    \end{center}
    \caption{Performance of the current version of our proposed method and a possible improved version of it without the South-pole step in \cite{Choi15}, provided a suitable choice of the boundary triangle in Equation (\ref{eqt:laplace}).}
    \label{enhancement}
\end{table*}

As a final remark, we discuss the possibility of further boosting up the efficiency of our proposed algorithm. Recall that in our proposed algorithm, we need to compute the spherical conformal parameterization of the double covered surface using the algorithm in \cite{Choi15} in order to find an initial planar parameterization. The algorithm in \cite{Choi15} consists of two major steps, namely a North-pole step and a South-pole step. In the North-pole step, the Laplace equation (\ref{eqt:laplace}) is solved in a triangular domain $[b_1,b_2,b_3]$ on the complex plane and the inverse stereographic projection is applied. Then the South-pole step aims to correct the conformality distortion near the North pole of the sphere caused by the discretization and approximation errors. In fact, since we are only interested in half of the glued surface, the South-pole step may be skipped as we can take the Southern hemisphere obtained by the first step as our result. It may already be with acceptable conformality.

The conformality distortion in the North-pole step in \cite{Choi15} is primarily caused by the choice of the boundary triangle $[a_1,a_2,a_3]$. If the chosen boundary triangle and its neighboring triangular faces are regular enough, then the overall distortion obtained in the North-pole step is already very low and only a small region around the North-pole is with relatively large distortion. In this case, the South-pole step is not needed to improve the distortion of the Southern region which we are interested. On the other hand, with the presence of the South-pole step, the overall distortion in the final spherical parameterization result will be negligible regardless of the choice of the boundary triangle. In short, with a well chosen boundary triangle $[a_1,a_2,a_3]$ in Equation (\ref{eqt:laplace}), half of the computational cost in computing the spherical conformal mapping can be further reduced.

Table \ref{enhancement} shows the performance of the current version of our proposed method and the possible improved version of it without the South-pole step in \cite{Choi15}, under a suitable choice of the boundary triangle $[a_1,a_2,a_3]$ in Equation (\ref{eqt:laplace}). It can be noted that the differences in the conformality distortion of the two versions are mostly negligible, while the version without the South-pole step possess a further 40\% improvement in the computational time on average. However, it should be reminded that the possible improved version without the South-pole step requires a suitably chosen boundary triangle $[a_1,a_2,a_3]$ for \cite{Choi15}, while the current version of our method is fully automatic. Hence, the current version of our proposed method is probably more suitable for practical applications until an automatic algorithm for searching for the most suitable boundary triangle $[a_1,a_2,a_3]$ is developed.

\section{Conclusion and future works} \label{conclusion}
In this paper, we have proposed a linear formulation for the disk conformal parameterizations of simply-connected open surfaces. We begin the algorithm by obtaining an initial planar parameterization via double covering and spherical conformal mapping. Note that even the size of the surface is doubled by double covering, the combination of the double covering technique and the spherical conformal mapping results in an efficient computation because of the symmetry. After that, we normalize the boundary and compose the map with a quasi-conformal map so as to correct the conformality distortion and achieve the bijectivity. Our proposed formulation is entirely linear, and hence the computation is significantly accelerated by over 60\% when compared with the fastest state-of-the-art approaches. At the same time, our parameterization results are of comparable quality to those produced by the other state-of-the-art approaches in terms of the conformality distortions, the bijectivity and the robustness. Therefore, our proposed algorithm is highly practical in real applications, especially for the problems for which the computational complexity is the main concern. In the future, we plan to explore more applications, such as remeshing and registration of simply-connected open surfaces, based on the proposed parameterization scheme.

\end{document}